\documentclass[11pt]{article}

\usepackage{times}
\usepackage{graphicx}
\usepackage{amssymb}
\usepackage{amsthm}
\usepackage{amsmath}
\usepackage[makeroom]{cancel}
\usepackage{diagbox}
\usepackage{url}
\usepackage{fullpage}

\begin{document}
\title{\Large A Formal Approach to Power Optimization in CPSs with Delay-Workload Dependence Awareness\thanks{
An abridged version of this paper is to appear in a Special Issue of IEEE Transactions on Computer-Aided Design of Integrated Circuits and Systems~\cite{Journalversion}; a preliminary version of this work by Yang and Ha~\cite{yang2015modeling} was presented at ISLPED 2015.
}}
\author{Hyung-Chan An\thanks{
{\tt hyung-chan.an@yonsei.ac.kr}. Department of Computer Science, Yonsei University. Part of this work was conducted while the author was with \'Ecole \mbox{Polytechnique} F\'ed\'erale de Lausanne. Supported in part by ERC Starting Grant 335288-OptApprox.
}
\and
Hoeseok Yang\thanks{
{\tt hyang@ajou.ac.kr}. Department of Electrical and Computer Engineering, Ajou University. Supported by ICT R\&D program of MSIP/IITP (B0101-15-0661, the research and development of the self-adaptive software framework for various IoT devices).
}
\and
Soonhoi Ha\thanks{
{\tt sha@snu.ac.kr}. Department of Computer Science and Engineering, Seoul National University. Supported by Basic Science Research Program (NRF-2013R1A2A2A01067907) through NRF Korea funded by MSIP.
}
}
\date{}

\maketitle

\newcommand{\wbase}{\ensuremath{w_{\mathrm{base}}}}
\newcommand{\POL}{\mathsf{POL}}
\newcommand{\OPT}{\mathsf{OPT}}
\newcommand{\spow}{P}
\newcommand{\eps}{\epsilon}
\newcommand{\mystar}{*}
\newcommand{\bigmid}{\mathrel{}\middle|\mathrel{}}
\newcommand{\argmin}{\operatornamewithlimits{argmin}}
\newcommand{\cl}{\mathrm{cl}}
\newcommand{\baseindex}{\eta}
\newcommand{\mypar}[1]{\paragraph*{#1}}
\newcommand{\api}[1]{\emph{#1}}
\newcommand{\vdd}{v_{\mathrm{dd}}}
\newcommand{\vth}{v_{\mathrm{th}}}

\newtheorem{thm}{Theorem}
\newtheorem{lemma}{Lemma}
\newtheorem{cor}{Corollary}
\newtheorem{defn}{Definition}
\newtheorem{obs}{Observation}
\newtheorem{claim}{Claim}
\newtheorem{conj}{Conjecture}

\setcounter{page}{0}
\maketitle
\thispagestyle{empty}

\begin{abstract}
The design of cyber-physical systems (CPSs) faces various new challenges that are unheard of in the design of classical real-time systems. Power optimization is one of the major design goals that is witnessing such new challenges. The presence of interaction between the cyber and physical components of a CPS leads to dependence between the time delay of a computational task and the amount of workload in the next iteration. We demonstrate that it is essential to take this delay-workload dependence into consideration in order to achieve low power consumption.

In this paper, we identify this new challenge, and present the first formal and comprehensive model to enable rigorous investigations on this topic. We propose a simple power management policy, and show that this policy achieves a best possible notion of optimality. In fact, we show that the optimal power consumption is attained in a ``steady-state'' operation and a simple policy of finding and entering this steady state suffices, which can be quite surprising considering the added complexity of this problem. Finally, we validated the efficiency of our policy with experiments.
\end{abstract}

\medskip
\noindent
{\small \textbf{Keywords:}
cyber-physical systems, power optimization, delay-workload dependence, dynamic voltage and frequency scaling, real-time}

\newpage

\section{Introduction}
\label{sec:intro}

Modern computation is not confined to small silicon dice anymore. In \emph{cyber-physical systems} (CPSs), computers interact with the physical world: actuators allow computer systems to manipulate the physical world, while sensors constantly provide the computer systems with outside information \cite{lee2008cyber,wu2011wireless}. The resulting feedback control loop makes the hallmark of CPSs, and poses unique challenges in the design of CPSs that were unheard of in the design of \emph{classical} real-time systems. Among these, this paper focuses in particular on the new challenges faced in power optimization in CPSs.

The existence of interaction between the \emph{cyber} and \emph{physical} components of CPSs implies that the physical world affects not only the particular value of the input to the computational task but also the amount of the computational workload. In particular, the time delay of a computational task can change the amount of workload in the next iteration: a typical example is witnessed when a CPS maintains an internal model of the physical world, where the model can range from a simple snapshot of a sensor reading \cite{wu2011wireless} to a sophisticated model of beliefs about the external world \cite{oh2007tracking,kubota2006cooperative}. The longer an iteration goes, the further this model can drift away from the physical reality; regaining its accuracy in the next iteration incurs extra computational cost.

One of the tools that are widely used by CPSs is computer vision-based object tracking algorithms. In a CPS equipped with object tracking capability, its \emph{internal model} of the physical world will maintain the coordinate of the tracked object in the image. As the execution delay between two consecutive invocations of the tracking algorithm becomes longer, the algorithm will be required to search a larger area in order to reconcile the physical reality with the internal model. It is important to properly address the resulting delay-workload dependence, since object tracking is frequently used in a variety of CPSs, including, but not limited to, vision-assisted control of unmanned air vehicles (UAV)~\cite{1656418, 5509287, kanade2004real}, surveillance camera tracking \cite{coifman1998real, 6155714}, and augmented reality \cite{sonntag2013towards, you2001fusion}. See \cite{ganapathi2010real, Ganapathi:2012:RHP:2403205.2403261} for additional examples of vision-based algorithms whose workload may vary with the execution delay.

Delay-workload dependence manifests itself also in different types of CPSs. Agrawal et al. \cite{Agrawal:2008:EPM:1376616.1376634}, for example, study the optimization of pattern matching over event streams, where the queries can be handled either by small amount (shorter delay, less workload) or in an aggregated manner (longer delay, more workload). CPSs whose cyber component exploits temporal coherence bears delay-workload dependence: any iterative algorithms that can be warm-started can lead to one. Haptic rendering in Human-Computer Interface (HCI) is an example, as it often uses adaptive sampling techniques to deal with the stringent real-time constraint~\cite{Barbic2007} and the rendering algorithm can be warm-started to exploit the temporal coherence~\cite{Barbic08}.

In order to ensure the responsiveness of a real-time system (and the physical stability of a CPS), real-time constraints are specified, often in the form of a maximum latency. Once this constraint is given, optimizing the microprocessor's power consumption over the operation frequency is rather straightforward in a classical real-time system where delay-workload dependence is absent: one can choose the running frequency of the microprocessor as low as possible without violating the latency constraint, and set the voltage of the microprocessor to the minimum needed to run at the chosen frequency \cite{aydin2004power}.

However, the presence of the delay-workload dependence invalidates this straightforward strategy. Running at an excessively low speed in one iteration would lead to an unfavorable increase in the workload of the next iteration, thereby requiring the next iteration to run at a high speed; running at an excessively high speed on the other hand could simply result in suboptimal power consumption. This dilemma creates a new need for a ``smart'' power optimization strategy that is aware of the delay-workload dependence.

\newpage
In this paper, we\begin{itemize}
\item identify this newly posed challenge, and present the first formal and comprehensive model which enables a rigorous analysis;
\item propose a \emph{simple} power management policy;
\item show the optimality of our policy;
\item and experimentally evaluate its efficiency.
\end{itemize}

In developing our results, we aim at formulating a model that is as general as possible: in fact, we do not assume any particular data representation on the parameters of our model. Our power management policy is therefore given as a mathematical characterization rather than an algorithmic procedure. In spite of the significant added complexity to the problem, our power management policy remains surprisingly simple; this simplicity enables in many cases an algorithmic reinterpretation of the characterization of our policy. However, we will adhere to its mathematical characterization throughout this paper, in favor of generality. 
This is particularly natural considering that the proposed technique is primarily a design-time methodology.

In Section~\ref{sec:model}, we present our formal model and formulate the problem of power optimization as a concrete mathematical question. Then, in Section~\ref{sec:opt}, we present our power management policy under a simplifying assumption that the workload is given as a continuous function of execution delay. Whilst this assumption is \emph{not} mathematically necessary, it will simplify the analysis of our policy and make the underlying intuition more visible.
The proposed model can be further generalized to be applied to a wide spectrum of potential applications; this versatility is discussed in Section~\ref{sec:modelgen}.  The analysis in full generality is presented in Section~\ref{sec:genproof}. Section~\ref{sec:case} evaluates the practical efficiency of our power management policy. To this end, we experimentally measured the power consumption characteristics of a mobile computing platform Samsung Exynos5422 (in Section~\ref{subsec:env}), profiled an OpenCV-based object tracking application to quantitatively identify the delay-workload dependence (Section~\ref{subsec:w_exp}), and evaluated the efficiency of our power management policy (Section~\ref{subsec:p_exp}).

\subsection{Related Work}

There have been some studies to apply multiple operation \emph{modes} in CPS design. Jha et al. \cite{jha2010synthesizing} studied a system that has different execution modes, each of which is known and modeled as a node in the modeling automaton. Canedo et al. presented a context-sensitive synthesis of CPS \cite{canedo2013context}. In order to overcome the incompleteness of the functional model, they adopt the recycle function that reliably generates the simulation model based on the context that the previous results caused. A runtime optimization of CPS is proposed by Cao et al. \cite{cao2013online},
where the design parameters are adaptively tuned considering the feedback results. While CPSs are enriched or optimized via multiple operation modes in the above mentioned works, none of them has taken the execution delay into consideration as a source of variance in workload.

There are a handful of literatures that study the relationship between control stablity and system performance in control-centric CPSs. A design guideline for flexible delay constraints in distributed CPS was proposed by Goswami et al. \cite{goswami2011co, goswami2014relaxing}, where some of the samples are allowed to violate the given delay deadline. They presented the applicability of the proposed approach using the FlexRay dynamic segment as a communication medium. They could improve the resource efficiency or flexibility of CPSs in favor of the stability. However, this relaxation of design constraints is not always feasible. Zhang et al. \cite{zhang2008task} took advantage of the fact that the longer computation delay may lower the gain of the control algorithm in the control example of inverted pendulums. If the delay becomes longer, the system can support more inverted pendulums within a given resource at the cost of reduced control stability. Such co-design approaches of control algorithm and system, though, are still blind to the relation between execution delay and workload in CPSs, which differentiates the proposed method from them. 

An alternative approach to the co-design of algorithm and system is found in the application of anytime algorithm to control-centric systems \cite{bhattacharya2004anytime, fontanelli2008anytime, quevedo2013sequence}. Anytime algorithms are a kind of algorithms that can be completed arbitrarily at any point and the quality of the algorithm output is proportional to the spent time. That is, the amount of time invested in the cyber system, in this model, is directly coupled with the stability of the system. However, how this compromised stability affects the system in ever-present feedback loops in CPSs is still missing in their models. On the contrary, in the proposed model, the harmed stability due to the lengthened delay manifest itself in the increased workload at the successive iteration.

\section{Problem Formulation}
\label{sec:model}

In this section, we present our formal model of the power optimization problem in the presence of delay-workload dependence.
Firstly, we describe our model and introduce the notation to be used throughout this paper in Section~\ref{subsec:model}.

Section~\ref{sec:aopt} then introduces a best-possible notion of optimality, called \emph{asymptotic optimality}.
Finally, the full formulation of the power optimization problem is formally presented in Section~\ref{subsec:prob}.


\subsection{Model}
\label{subsec:model}

\mypar{Units of measurement}
In order to keep the presentation as succinct as possible, we will choose the units of measurement in a careful way. Note that these choices are purely for the sake of notational convenience and do not inherently change the analysis: our entire result can be presented under any arbitrary choice of units by introducing appropriate conversion factors.

First, we choose the unit of processing speed so that the maximum speed corresponds to one unit. For instance, if the given CPS is equipped with a microprocessor with the maximum operating frequency of 2GHz, running it at 1GHz is denoted by $s=0.5$.

We also need to choose a unit of workload; we define one unit of workload as the workload that can be processed in one unit of time at the full speed ($s=1$). For example, if the system runs at the speed of $1/2$, it will take two units of time to process one unit of workload.

The notation to be defined in this section and Section~\ref{sec:opt} is summarized in Table~\ref{tab:not}.

\begin{table}[t]
\centering
\caption{Summary of the notation in Sections~\ref{sec:model} and~\ref{sec:opt}}
\label{tab:not}
\label{my-label}
\vspace{1ex}\small
\begin{tabular}{c|l}
\hline
\multicolumn{1}{c|}{Notation} & \multicolumn{1}{c}{Meaning} \\ \hline\hline
$\spow$ & power consumption characteristics\\ \hline
$W$ & delay-workload relation \\ \hline
$w_i$ & workload of the $i$-th iteration  \\ \hline
$w_1$ & initial workload (i.e., workload of the first iteration) \\ \hline
$\wbase$ & baseline workload (bookkeeping work that is always required) \\ \hline
$s_i$ &  processing speed of the $i$-th iteration \\ \hline
$t_i$ & execution delay of the $i$-th iteration  \\ \hline
$T$ & real-time constraint (i.e., maximum allowed execution delay)  \\ \hline
$n$ & time horizon  \\ \hline \hline
$t_{\min}$ & minimum possible execution delay \\ \hline
$\hat s$ & target speed \\ \hline
$\tau$ & ideal point workload \\ \hline
\end{tabular}
\end{table}

\subsubsection{Power consumption characteristics}
Modern microprocessors support dynamic voltage and frequency scaling (DVFS), where the operating voltage and frequency can be modulated to optimize the power consumption. In describing the power consumption characteristics, our model does not assume any specific DVFS model; instead, it achieves higher generality by describing the characteristics with a function that satisfies a small set of natural axioms. This allows us to use our model when the power consumption is adjusted via mechanisms other than DVFS, such as processor selection in a heterogeneous multi-processor system. For $\spow:[0,1]\to\mathbb{R}_+$, let $\spow(s)$ denote the power consumption when the system is run at speed $s$. Note that $\spow$ is a function of speed only: if the power consumption is adjusted via DVFS, the operating voltage can be optimally chosen once the frequency is fixed, and this choice can be implicitly encoded within the definition of $\spow$.

The axiomatic assumptions we will make on $\spow$ is as follows. First, we assume that $\spow$ is nondecreasing, convex, and continuous\footnote{This simply amounts to assuming $\lim_{f\downarrow 0}\spow(f)=\spow(0)$ and $\lim_{f\uparrow 1}\spow(f)=\spow(1)$, since $\spow$ is already convex.}. Note that this set of assumptions is general enough to embrace, for example, the power consumption characteristics of CMOS circuits: the power dissipation of CMOS gates is dominated by $c\vdd^2f$, where $c$ denotes the load capacitance, $\vdd$ is the operating voltage, and $f$ is the frequency; the maximum operating frequency is given proportional to $(\vdd-\vth)^2/\vdd$, where $\vth$ is the threshold voltage. From these facts, it is easy to see that the resulting $\spow$ is nondecreasing, convex, and continuous. However, instead of relying on such an idealized formula, our model can be used with actual power consumption characteristics obtained by measurement or taken from manufacturer's data. We also remark that these assumptions are far from being minimal: in Section~\ref{sec:modelgen}, we show how some of these assumptions can be dropped without loss of generality.

Our second assumption is that the operating frequency can be modulated to any given value in $[0,1]$. This assumption, however, can also be removed to handle the case where the microprocessor has only a few predetermined modes of operation. Details can be found in Section~\ref{sec:modelgen}.

\subsubsection{Delay-workload dependence}
We model the delay-workload relation as the following function. Let $T$ be the maximum execution delay as set out by the real-time constraints; the delay-workload relation is specified by $W:(0,T]\to\mathbb{R}_+$, where $W(t)$ denotes the workload of the iteration that follows an iteration of execution delay $t$. The workload of the first iteration is denoted by $w_1$. In practice, one can determine $w_1$ and $W$ for the application at hand by using static analysis or profiling techniques at design time.

As was discussed in Section~\ref{sec:intro}, the main difficulty of the power optimization problem lies in the fact that a longer delay leads to a larger workload in the next iteration. We will thus assume that $W$ is a nondecreasing function.\footnote{This assumption, though, can be replaced. Details follow in Section~\ref{sec:modelgen}.} In every iteration, there would be some basic bookkeeping work required regardless of the previous iteration's delay. Let $\wbase>0$ denote the workload arising from such basic work, and we will have $W(t)\geq \wbase$ for all $t\in(0,T]$, and $w_1\geq\wbase$.

\subsubsection{Execution trace}
Suppose that the CPS runs for $n$ iterations. We call $n$ the \emph{time horizon} of the system, but we will not assume any \emph{a priori} knowledge about the time horizon: the system does not know in advance when it will be halted from outside. Let $w_i$, $s_i$, and $t_i\ (1\leq i\leq n)$ be the workload, processing speed, and execution delay of the $i$-th iteration, respectively.

We do not allow changing the processing speed within a single iteration, and this assumption does not harm the power optimality due to the convexity of $\spow$.
If the processing speed changes within an iteration, we can instead fix the speed to the average speed during that iteration and we will be able to process the same amount of work while consuming no more power\footnote{Suppose that the processing speed changes during the $i$-th iteration. For \mbox{$s_i:[0,t_i]\to[0,1]$}, let $s_i(t)$ denote the processing speed after $t$ units of time since the beginning of the $i$-th iteration. Let $\bar s_i:=\frac{1}{t_i}\int_0^{t_i}s_i(t)dt$. We have that the average power consumption $\frac{1}{t_i}\int_0^{t_i}\spow(s_i(t))dt$ is greater than or equal to $\spow(\bar s_i)$ from Jensen's inequality. (We assume that the integrals exist.)}. 

For all $i$, $w_i$ by definition has to be less than or equal to $s_it_i$: as the $i$-th iteration runs for $t_i$ units of time at speed $s_i$, at most $s_it_i$ units of workload can be processed by the end of this iteration, whereas $w_i$ is defined as the amount of work that needs to be done in the $i$-th iteration. 
In fact, we can further assume that they are equal, i.e., $w_i=s_it_i$, in a power-optimal scenario. Suppose that $w_i<s_it_i$ for some iteration $i$. We can then decrease $s_i$ to $\frac{w_i}{t_i}$ instead; this will not increase the power consumption of iteration $i$ due to the monotonicity of $\spow$ and will not otherwise change the system's behavior.

An \emph{execution trace} is defined as a sequence of processing speeds.
\begin{defn}[Execution trace]
\label{defn:et}
Suppose that a given CPS has run for $n$ iterations. We call the sequence $s_1,\ldots,s_n$ of the processing speeds its \emph{execution trace}, and $n$ the \emph{length} or \emph{time horizon} of this execution trace.
\end{defn}
An execution trace contains sufficient information to determine the execution delay and workload of every iteration given the parameters of the CPS: we have $t_i:=\frac{w_i}{s_i}$ and $w_{i+1}=W(t_i)$ for all $i$. Note that the real-time constraints demand that $t_i\leq T$ for all $i$. We will sometimes call an execution trace a \emph{real-time feasible execution trace} in order to emphasize the presence of the real-time constraints.

\subsection{Asymptotic Optimality}
\label{sec:aopt}

In this subsection, we show that a ``natural'' notion of optimal power management policy is an impossible goal to achieve, and introduce \emph{asymptotic optimality} as the ``right'' notion of optimality.

Given a system specified by its power consumption characteristics $P$, delay-workload relation $W$, initial workload $w_1$, and the real-time constraint $T$, we could set our goal as designing a power management policy such that, if the system is halted after $n$ iterations, the resulting execution trace minimizes the average power consumption\[
\frac{\sum_{i=1}^n t_i\spow(s_i)}{\sum_{i=1}^n t_i}
.\]Unfortunately, however, this goal is impossible to achieve.

Our model does not assume that we ``know the future'', so the power management policy needs to work without knowing when the system is to be halted. This makes it impossible for a power management policy to produce an exactly optimal execution trace, which is demonstrated by the following toy example: consider a system with $T=1$, $w_1=\frac{1}{2}$, $W(t)=\sqrt{t}$, and $\spow(s)=s^2$. The (unique) execution trace of length $2$ that minimizes the average power consumption in this system is $s_1\approx 0.6180$, $s_2\approx 0.8995$.\footnote{
Note that the average power consumption is given as $\frac{t_1P(s_1)+t_2P(s_2)}{t_1+t_2}=\frac{\frac{\frac{1}{2}}{s_1}s_1^2+\frac{\sqrt{\frac{\frac{1}{2}}{s_1}}}{s_2}s_2^2} {\frac{\frac{1}{2}}{s_1}+\frac{\sqrt{\frac{\frac{1}{2}}{s_1}}}{s_2}}=:p(s_1,s_2)$ and that $s_1$ and $s_2$ are subject to the following constraints: $0\leq s_1,s_2\leq 1$, $\frac{\frac{1}{2}}{s_1}\leq 1$, and $\frac{\sqrt{\frac{\frac{1}{2}}{s_1}}}{s_2}\leq 1$. The last two constraints are the real-time constraints. Since $p(s_1,s_2)=\frac{\frac{s_1}{2}+\sqrt{\frac{1}{2s_1}}s_2} {\frac{1}{2s_1}+\frac{1}{s_2}\sqrt{\frac{1}{2s_1}}}$ is a nondecreasing function of $s_2$, its minimum is attained when $s_2=\sqrt{\frac{1}{2s_1}}$, making the last constraint tight. In this case, we have $p(s_1,s_2)=\frac{\frac{s_1}{2}+\frac{1}{2s_1}}{\frac{1}{2s_1}+1}$, which in turn is minimized by $s_1=\frac{\sqrt{5}-1}{2}\approx 0.6180$. Note that $\sqrt{\frac{1}{2s_1}}=\sqrt{\frac{\sqrt{5}+1}{4}}\approx 0.8995$.
} Since the power management policy does not know the time horizon in advance, it would need to choose $s_1$ and $s_2$ as the speed in the first two iterations in order to successfully produce an optimal execution trace in case the system is halted after two iterations. However, if the system is halted after three iterations, the resulting execution trace cannot be optimal, because the optimal execution trace of length $3$ does not start with the above $s_1$ and $s_2$.

However, as it turns out, it is possible to obtain a power management policy that is near-optimal for any time horizon:
\begin{defn}[Asymptotic optimality]
\label{def:aopt}
We say a power management policy is \emph{asymptotically optimal} if the policy can be halted after an arbitrary number of iterations, and the average power consumption of the resulting execution trace is asymptotically optimal, i.e., the \emph{error} defined as the difference between the achieved average power consumption and the exact optimum (which can only be calculated with the knowledge of $n$) tends to zero.
\end{defn}
In Section~\ref{sec:opt}, we show that an asymptotically optimal power management policy, in fact a very simple one, does exist under a simplifying assumption that $W$ is a continuous function. 

\subsection{Problem Statement}
\label{subsec:prob}

Finally, we restate our goal formulated as a concrete optimization problem:
\begin{quote}
Given a system specified by its power consumption characteristics $P$, delay-workload relation $W$, initial workload $w_1$, and the real-time constraint $T$, design a power management policy that is \emph{asymptotically optimal}, i.e., a power management policy that gives a near-optimal execution trace for any time horizon.
\end{quote}

\section{Power Optimization: a Special Case}
\label{sec:opt}

In this section, we present a provably asymptotically optimal power management policy, focusing on the special case where the delay-workload relation $W$ is a continuous function. This restriction allows us to omit the tedious details required to maintain the mathematical rigor under the general delay-workload relation, leading to a simpler presentation which still exhibits all the key intuition. A proof with the full generality is deferred to Section~\ref{sec:genproof} for interested readers.

\subsection{Overview}
\label{sec:opt:ov}

Unless the given system is real-time infeasible (this can happen if the system parameters are such that its workload will ``explode'' even if the system is run at the full speed, destined to violate the real-time constraint), the system has ``steady states'' where the workload, execution delay, and processing speed all remain the same across iterations.

Among all the possible steady states, our power management policy finds one that consumes the least amount of power, and stay in this minimum-power steady state.
Intuitively, the convexity of the power consumption characteristics penalizes fluctuation in the speed; the (asymptotic) optimum can therefore be achieved by such a steady state.

In presenting the proposed policy, we first show in Section~\ref{subsec:pre} how to determine the range of execution delays that can lead to a steady state and find a minimum-power steady state. Then, our policy is to quickly enter this minimum-power steady state and remain there; this policy is drawn in Section~\ref{subsec:pol}.
The formal proof of its optimality follows in Section~\ref{subsec:anal}.


\subsection{Preliminaries}
\label{subsec:pre}

\mypar{Bounding execution delays}
As a preparatory step to describe the proposed power management policy, we first bound the range of execution delays. It is relatively easy to see the delay of each iteration is within the range of $[\wbase,T]$;
but here we present a tighter\footnote{In fact, Observation~\ref{obs:lb} is \emph{almost} tight: see Theorem~\ref{thm:reach} for a complete characterization.} lower bound on execution delays, given by Definition~\ref{def:lb}.
\begin{defn}[Minimum delay]
\label{def:lb}
The minimum delay $t_{\min}$ is defined as the longest delay $t\leq w_1$ such that $W(t)$ is greater than or equal to $t$. That is,
$t_{\min} :=\max\{t\mid W(t)\geq t,\ 0<t\leq w_1\}$.
\end{defn}

$t_{\min}$ safely bounds the execution delay of each iteration from below, as observed below.

\begin{obs}
\label{obs:lb}
No execution trace has an iteration whose delay is strictly smaller than $t_{\min}$.
\end{obs}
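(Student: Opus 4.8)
The plan is to prove the equivalent statement that $t_i \geq t_{\min}$ for every iteration $i$ of an arbitrary real-time feasible execution trace, by induction on $i$. Two structural facts drive the whole argument. First, since $s_i \leq 1$, we have $t_i = w_i/s_i \geq w_i$, so the delay of any iteration is at least its own workload. Second, the recurrence $w_{i+1} = W(t_i)$ together with the monotonicity of $W$ lets me propagate a lower bound on a delay into a lower bound on the next iteration's workload.

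For the base case, I would use that $t_1 = w_1/s_1 \geq w_1$ (as $s_1 \leq 1$), while $t_{\min} \leq w_1$ by definition, since the maximum in Definition~\ref{def:lb} ranges only over $t \leq w_1$. Hence $t_1 \geq w_1 \geq t_{\min}$, so the first iteration already satisfies the bound.

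For the inductive step (with $i \geq 2$), I would assume $t_{i-1} \geq t_{\min}$. Monotonicity of $W$ then gives $w_i = W(t_{i-1}) \geq W(t_{\min})$, and the defining property of $t_{\min}$ (it belongs to the set $\{t \mid W(t) \geq t\}$) gives $W(t_{\min}) \geq t_{\min}$; chaining these yields $w_i \geq t_{\min}$. Combined with $t_i \geq w_i$ from the first structural fact, I conclude $t_i \geq t_{\min}$, which closes the induction.

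The inequality itself is short; the only point that really needs care is the well-definedness of $t_{\min}$, which the statement presupposes. I would verify that the set $\{t \mid W(t) \geq t,\ 0 < t \leq w_1\}$ is nonempty, which holds because $W(t) \geq \wbase > t$ for all sufficiently small $t > 0$, and that its supremum is attained — this is where continuity of $W$ in the present special case enters, as $W(t)-t$ is then continuous and the (relatively closed, bounded) feasible set contains its supremum. I would also note in passing that every invocation of $W$ above stays inside the domain $(0,T]$, since real-time feasibility gives $t_{i-1} \leq T$ and $w_1 \leq t_1 \leq T$. The main obstacle, such as it is, is thus confined to this existence check rather than to the inductive inequality.
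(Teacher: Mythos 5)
Your proof is correct and takes essentially the same route as the paper's: induction on the iteration index, with the base case $t_1 \geq w_1 \geq t_{\min}$ and the inductive step chaining $t_i \geq w_i = W(t_{i-1}) \geq W(t_{\min}) \geq t_{\min}$ via monotonicity of $W$ and the defining property of $t_{\min}$. Your additional check that $t_{\min}$ is well defined (nonemptiness of the set and attainment of the supremum under continuity of $W$) is a sound extra precaution that the paper leaves implicit, but it does not change the argument.
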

\begin{proof}

Let $s_1,\ldots,s_n$ be an arbitrary execution trace with execution delays $t_1,\ldots,t_n$. We will show $t_i\geq t_{\min}$ for $i=1,\ldots,n$ by induction. 

\emph{Basis.} 
We have $t_1=\frac{w_1}{s_1}\geq w_1\geq t_{\min}$, where the last inequality follows from the choice of $t_{\min}$. 

\emph{Inductive step.} 
Now suppose $t_{i_0}\geq t_{\min}$ for some $i_0\in\{1,\ldots,n-1\}$. Then $t_{i_0+1}\geq W(t_{i_0})\geq W(t_{\min})\geq t_{\min}$, where the second inequality follows from the monotonicity of $W$ and the third from the choice of $t_{\min}$.
\end{proof}

Observation~\ref{obs:lb} 
is illustrated in Fig.~\ref{fig:tmin} showing how $t_{\min}$ bounds execution delays from below in a typical case. Geometrically speaking, $t_{\min}$ (usually) is the rightmost point of intersection between $w=t$ and $w=W(t)$, restricted to the left of $t=w_1$. For a given initial workload $w_1$, the execution delay never goes below $t_{\min}$ because the workload never goes below $t_{\min}$, even if the system is run at the fastest possible speed.

\begin{figure}[tb]
\centering
\includegraphics[width=6cm, trim =0 0.5cm 0 0cm]{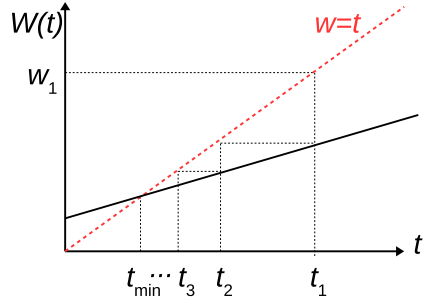}
\caption{An example of the lower bound $t_{\min}$ on execution delays.}\vspace{-1em}
\label{fig:tmin}
\end{figure}

\mypar{Determining the target speed}
From the range of execution delays obtained above, now we define the \emph{target speed} of our policy as follows. 
\begin{defn}[Target speed]
\label{def:khat}
Within the specified range of execution delay, the target speed $\hat s$ is defined as the minimum workload-to-delay ratio in a ``steady state''. That is,
$\hat s:=\min_{t_{\min}\leq t\leq T}\frac{W(t)}{t}$.
\end{defn}
In order to understand the name ``steady state'', suppose that the system enters an iteration with delay $t_i\in \argmin_{t_{\min}\leq t\leq T}\frac{W(t)}{t}$; then, as long as the operating speed is set at $\hat s$, the execution delay will remain the same since $t_{i+1}=\frac{W(t_i)}{\hat s}=\frac{t_i\hat s}{\hat s}$.

\mypar{Sustainability}

One could naturally ask: what if the target speed $\hat s$ is out of the valid range, i.e., $\hat s>1$? In fact, such a system is not ``sustainable'': it fails to respect the real-time constraint $T$ after a bounded number of iterations, as shown by Observation~\ref{obs:sustain}. We will thus assume $\hat s\leq 1$ in what follows. Note that we have $\hat s>0$.
\begin{defn}[Sustainability]
We say a system is \emph{sustainable} if, for all $n$, there exists a real-time feasible execution trace of length $n$.
\end{defn}

\begin{obs}
\label{obs:sustain}
If the target speed is not within the valid range, i.e., $\hat s>1$, the system is not sustainable. In particular, there exists a finite bound $n_0$ such that every execution trace of length $n_0$ or longer violates the real-time constraint.
\end{obs}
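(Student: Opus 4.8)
The plan is to turn the hypothesis $\hat s>1$ into a uniform multiplicative gap between the next workload and the current delay, and then show that this gap forces the execution delays to grow geometrically until they overshoot $T$. First I would unwind Definition~\ref{def:khat}: since $\hat s=\min_{t_{\min}\le t\le T}\frac{W(t)}{t}>1$, every delay $t$ lying in the range $[t_{\min},T]$ satisfies $W(t)\ge \hat s\,t$, with the factor $\hat s$ bounded strictly away from $1$.

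Next, I would argue by contradiction. Suppose some execution trace $s_1,\dots,s_n$ is real-time feasible, i.e.\ $t_i\le T$ for all $i$. By Observation~\ref{obs:lb} we also have $t_i\ge t_{\min}$, so every delay lies in the interval $[t_{\min},T]$ on which the bound above is valid. Combining $w_{i+1}=W(t_i)$ with the speed cap $s_{i+1}\le 1$ then yields, for each $i<n$,
\[
t_{i+1}=\frac{w_{i+1}}{s_{i+1}}\ge w_{i+1}=W(t_i)\ge \hat s\,t_i .
\]
A one-line induction, with the base case $t_1=\frac{w_1}{s_1}\ge w_1\ge\wbase>0$, gives $t_i\ge \hat s^{\,i-1}t_1$ for all $i\le n$.

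Finally I would invoke a counting argument: because $t_1>0$ and $\hat s>1$, the quantity $\hat s^{\,i-1}t_1$ diverges, so taking $n_0$ to be the least integer with $\hat s^{\,n_0-1}w_1>T$ (e.g.\ $n_0=\lceil 1+\log_{\hat s}(T/w_1)\rceil$) forces $t_{n_0}\ge \hat s^{\,n_0-1}t_1\ge \hat s^{\,n_0-1}w_1>T$ on any feasible trace of length at least $n_0$, contradicting feasibility. Hence every execution trace of length $n_0$ or longer violates the real-time constraint, and the system is not sustainable. I do not expect a serious obstacle; the only point that genuinely needs care is ensuring each $t_i$ lies in $[t_{\min},T]$ so that the ratio bound $W(t_i)\ge\hat s\,t_i$ actually applies, which is exactly what Observation~\ref{obs:lb} and the assumed feasibility together supply.
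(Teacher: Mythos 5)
Your proof is correct and follows essentially the same route as the paper's: the key inequality $t_{i+1}\geq W(t_i)\geq \hat s\, t_i$ from the definition of $\hat s$ and $s_{i+1}\leq 1$, induction to get geometric growth $t_i\geq \hat s^{\,i-1}w_1$, and a choice of $n_0\approx \log_{\hat s}(T/w_1)$ forcing $t_{n_0}>T$. Your contradiction framing, which uses Observation~\ref{obs:lb} and assumed feasibility to guarantee each $t_i\in[t_{\min},T]$ before applying the minimum in Definition~\ref{def:khat}, is in fact slightly more careful than the paper's direct version, which applies that bound to an arbitrary trace without remarking on the range restriction.
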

\begin{proof}
Choose $n_0:=\lfloor \log_{\hat s}\frac{T}{w_1}\rfloor+2$. We use a similar argument as in the proof of Observation~\ref{obs:lb}. Let $s_1,\ldots,s_n$ be an arbitrary execution trace of length $n\geq n_0$ and $t_1,\ldots,t_n$ be its execution delays. For all $i\geq 2$, we have $t_i= \frac{W(t_{i-1})}{s_i}=\frac{W(t_{i-1})}{t_{i-1}}\frac{t_{i-1}}{s_i} \geq \hat s\cdot\frac{ t_{i-1}}{s_i}\geq \hat s\cdot t_{i-1}$, where the first inequality follows from the definition of $\hat s$ and the second from $s_i\leq 1$; thus, by induction, we have $t_{n_0}\geq\hat s^{n_0-1}\cdot w_1>T$.
\end{proof}

An example of a system that is not sustainable is shown in Fig.~\ref{fig:tmax}. The curve $w=W(t)$ is always above the line $w=t$, except for the ``irrelevant'' portion on the right of the real-time constraint. Therefore, $\hat s$ is greater than 1. In this system, starting with the initial workload of $w_1$, the amount of workload keeps growing even at the full processing speed, eventually violating the real-time constraint $T$. Fig.~\ref{fig:tmax} shows that, even though the system is run at the maximum speed, a real-time constraint violation happens at the fourth iteration, i.e., $t_4>T$.

\begin{figure}[tb]
\centering
\includegraphics[width=5.9cm, trim =0 0.5cm 0 0cm]{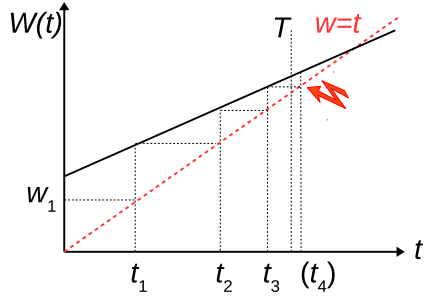}
\caption{An unsustainable system.}\vspace{-1em}
\label{fig:tmax}
\end{figure}

\subsection{Proposed Policy}
\label{subsec:pol}

Our power management policy operates in three simple phases. During the first phase, the system is fixed at the full speed, and the phase lasts until the workload drops below or equal to $\tau:= \max\{t\mid W(t)=\hat s  t,\ t_{\min}\leq t\leq T \}$. Intuitively, $\tau$ stands for the workload at the ideal point where the minimum workload-to-delay ratio $\hat s$ is achieved. Once the workload drops below this level, we can adjust the speed of the system to enter this ideal point. Note that the first phase may be of zero length: our policy immediately enters the second phase if $w_1\leq\tau$. Let $\bar w$ denote the workload at the beginning of the second phase, and then the second phase consists of a single iteration with speed $\bar w/\tau$. Subsequently, the speed is indefinitely fixed at $\hat s$, which is the last phase of our policy.

Fig.~\ref{fig:policy} depicts an example of the execution trace generated by our power management policy. As the initial workload is larger than $\tau$, we start with running at the full speed $s_1=1$, which is the beginning of the first phase. As indicated by the dashed lines, the full processing speed repeatedly reduces the workload of each iteration, eventually reaching below the ideal point workload $\tau$ at the fourth iteration ($w_4<\tau$). Then, the speed is modulated to make the execution delay equal to $\tau$: i.e., we choose $s_4=w_4 /\tau$. This is the single iteration that forms the second phase. In following iterations, the speed is fixed at $\hat s$ in a steady state until the system is halted. In sum, our power management policy results in an execution trace of $1, 1, 1, w_4/\tau, \hat s, \hat s, \hat s, \cdots$ in this example.

\begin{figure}[tb]
\vspace{2em}\centering
\includegraphics[width=7.6cm, trim =3cm 13cm 3cm 2cm]{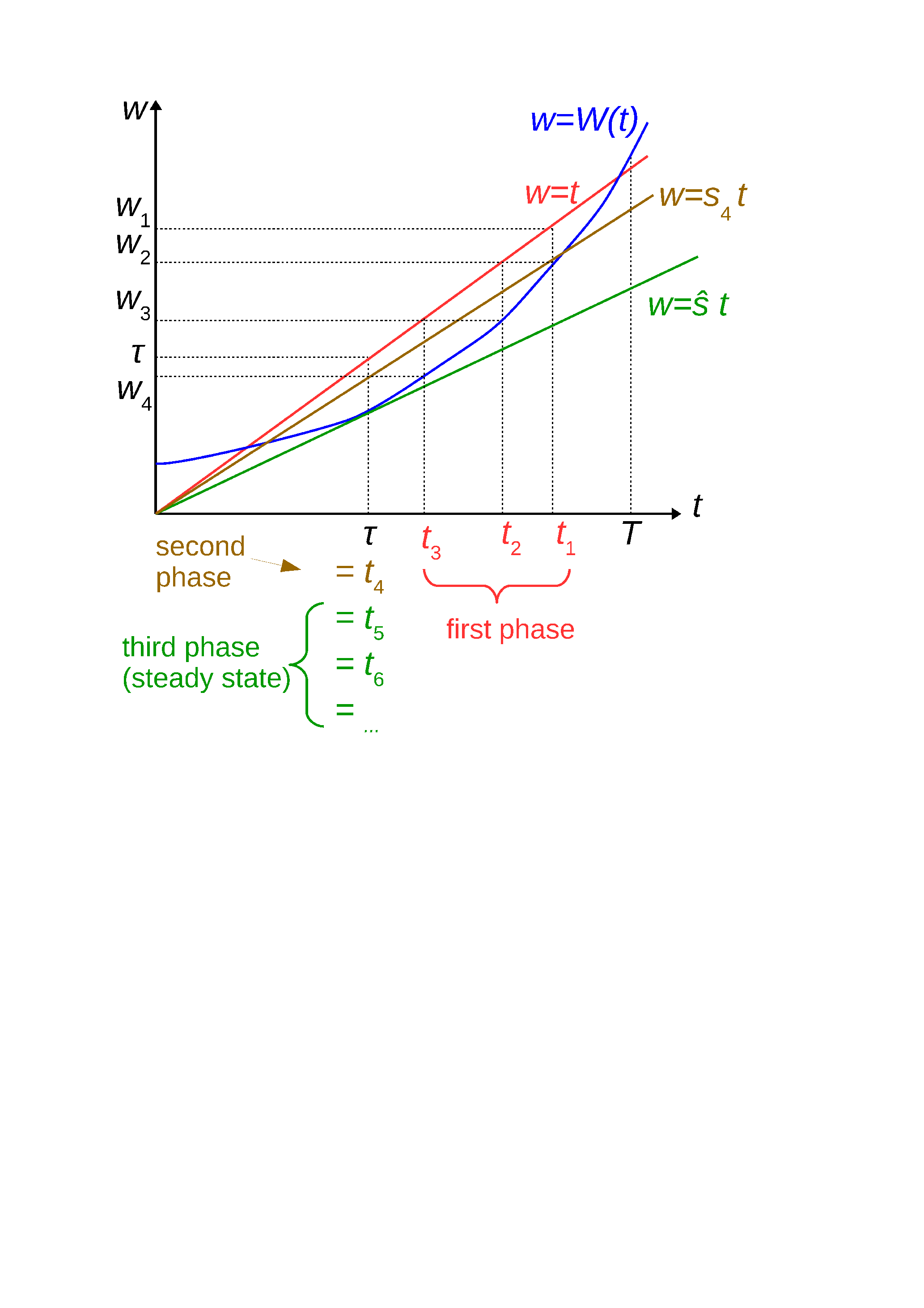}
\caption{An operation example of the proposed power management policy: The first phase runs for three iterations ($t_1,t_2,t_3$) followed by a single iteration ($t_4$) of the second phase. From the fifth iteration on, the system keeps running at the target speed without further modulation of the speed (the third phase). }
\label{fig:policy}
\end{figure}

\subsection{Analysis}
\label{subsec:anal}

In this subsection, we show that our power management policy is asymptotically optimal.
Let $\POL(w_1,n)$ denote the average power consumption of our policy when run for $n$ iterations, and $\OPT(w_1,n)$ denote the infimum\footnote{In fact, the minimum exists if the system is sustainable.} average power consumption of the execution traces of length $n$. In the rest of this section, we show the following main theorem.
\begin{thm}[Asymptotic optimality of the proposed policy]
\label{thm:main}
The difference between the ``exact optimum'' and the average power consumption of our power management policy tends to zero as the time horizon goes to infinity. That is,
$\displaystyle\lim_{n\to\infty} \left[\POL(w_1,n)-\OPT(w_1,n)\right] = 0$.
\end{thm}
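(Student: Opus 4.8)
The plan is to prove the theorem by establishing two bounds and showing they sandwich the policy's performance. First I would set up the natural lower bound on $\OPT(w_1,n)$: since $\hat s=\min_{t_{\min}\le t\le T}\frac{W(t)}{t}$ is the minimum sustainable workload-to-delay ratio, I claim that every real-time feasible execution trace must run at an average speed of at least $\hat s$, and hence consumes average power at least $\spow(\hat s)$. The intuition is that over the course of $n$ iterations the total work processed is $\sum_i s_i t_i=\sum_i w_i$, while the total elapsed time is $\sum_i t_i$; because the workloads cannot decay below the steady-state ratio forces the average speed $\frac{\sum_i w_i}{\sum_i t_i}$ to be bounded below by $\hat s$ up to a boundary correction. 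By convexity of $\spow$ and Jensen's inequality, the average power $\frac{\sum_i t_i\spow(s_i)}{\sum_i t_i}\ge\spow\!\left(\frac{\sum_i t_i s_i}{\sum_i t_i}\right)\ge\spow(\hat s)$, using that $\spow$ is nondecreasing. This gives $\OPT(w_1,n)\ge\spow(\hat s)-o(1)$, where the correction accounts for the first and last iterations not yet being in steady state.

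Next I would analyze $\POL(w_1,n)$ directly from the three-phase description. The first phase runs at full speed $s=1$ for some fixed number $k$ of iterations (independent of $n$, since full-speed operation strictly reduces the workload toward $\tau$ and $k$ is determined purely by $w_1$, $W$, and $\tau$); the second phase is a single iteration at speed $\bar w/\tau$; and the third phase runs indefinitely at $\hat s$, each such iteration consuming power exactly $\spow(\hat s)$. The total power over $n$ iterations is therefore $\sum_{i=1}^{k}t_i\spow(1)+t_{k+1}\spow(\bar w/\tau)+\sum_{i=k+2}^{n}\tau\,\spow(\hat s)$ divided by $\sum_i t_i$. The key point is that the numerator and denominator of the transient (first two phases) are bounded constants independent of $n$, while the steady-state contribution grows linearly in $n$. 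Thus $\POL(w_1,n)=\spow(\hat s)+\frac{C}{(n-k-1)\tau+D}$ for bounded constants $C,D$, and this tends to $\spow(\hat s)$ as $n\to\infty$.

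Combining the two, I would conclude $\POL(w_1,n)\to\spow(\hat s)$ and $\OPT(w_1,n)\ge\spow(\hat s)-o(1)$; since trivially $\OPT(w_1,n)\le\POL(w_1,n)$, both converge to $\spow(\hat s)$ and their difference tends to zero. The main obstacle I anticipate is making the lower bound rigorous: specifically, justifying that the average speed of \emph{any} feasible trace is at least $\hat s$ up to a vanishing boundary term. The subtlety is that an adversarial trace could momentarily run slower than $\hat s$ in early iterations (letting the workload build up) and ``pay the bill'' only near the horizon, or could exploit the initial workload window before $t_{\min}$ takes effect. Controlling these boundary effects — showing that any savings from running below $\hat s$ are necessarily repaid within a bounded number of iterations, so that the amortized average speed cannot dip below $\hat s$ except by an $O(1/n)$ amount — is where the real work lies, and it is precisely here that Observation~\ref{obs:lb} (bounding delays below by $t_{\min}$) and the definition of $\hat s$ as a minimum over the full feasible delay range must be invoked carefully.
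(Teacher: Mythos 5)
Your skeleton coincides with the paper's: show $\POL(w_1,n)\to\spow(\hat s)$ (bounded transient plus steady state), lower-bound $\OPT(w_1,n)$ via Jensen's inequality, and combine with the trivial $\OPT\le\POL$. However, the step you explicitly defer---the lower bound on $\OPT$---is the heart of the proof, and it remains a genuine gap in your write-up; moreover, the fix you envision (an amortization argument showing that savings are ``repaid within a bounded number of iterations'') is not what is needed. The resolution is simpler and entirely local: by the definition of an execution trace, $\sum_{i=1}^n w^\mystar_i = w_1+\sum_{i=1}^{n-1}W(t^\mystar_i)$, and since Observation~\ref{obs:lb} together with real-time feasibility gives $t^\mystar_i\in[t_{\min},T]$ for \emph{every} $i$ (including $i=1$, so there is no ``window before $t_{\min}$ takes effect''), Definition~\ref{def:khat} yields the pointwise bound $W(t^\mystar_i)\ge\hat s\,t^\mystar_i$ for every $i$. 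In other words, the bill for running slowly in iteration $i$ is always paid in iteration $i+1$; no trace can dip below the ratio $\hat s$ even momentarily. Hence
\[
\frac{\sum_{i=1}^n w^\mystar_i}{\sum_{i=1}^n t^\mystar_i}\;\ge\;\frac{\sum_{i=1}^{n-1}W(t^\mystar_i)}{\sum_{i=1}^{n}t^\mystar_i}\;\ge\;\hat s\cdot\frac{\sum_{i=1}^{n-1}t^\mystar_i}{\sum_{i=1}^{n}t^\mystar_i}\;\ge\;\hat s-\frac{\hat s\,T}{n\,\wbase},
\]
the only unpaid bill being the workload $W(t^\mystar_n)$ induced by the final iteration, which is exactly the $O(1/n)$ boundary term (the paper's Lemma~\ref{lem:opt} performs this computation and arrives at $\OPT(w_1,n)\ge\spow\bigl(\hat s-\tfrac{T^2}{n\wbase^2}\bigr)$). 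Note also that converting this $O(1/n)$ \emph{speed} deficit into a vanishing \emph{power} deficit requires invoking both monotonicity and continuity of $\spow$ at $\hat s>0$; your sketch never uses continuity, and without it this last step fails.

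A second, smaller gap: your assertion that the first phase lasts a fixed number $k$ of iterations ``since full-speed operation strictly reduces the workload toward $\tau$'' is not a proof. First, strict decrease itself needs justification---it rests on showing $W(t)<t$ for all $t\in[\tau,w_1]$, which follows from the maximality in the definitions of $t_{\min}$ and $\tau$ (Claim~\ref{claim:pol1} proves exactly this). Second, a strictly decreasing sequence need not cross $\tau$ in finitely many steps without a further argument; the paper obtains a uniform contraction factor $m:=\max_{\tau\le t\le w_1}W(t)/t<1$, giving geometric decay and the explicit bound $N=\lceil\log_m(\tau/w_1)\rceil$. Finally, your sketch takes feasibility for granted, whereas one must verify that the second-phase speed $\bar w/\tau$ lies in $(0,1]$ and that every delay in all three phases is at most $T$ (Claims~\ref{claim:pol1} and~\ref{claim:pol2}); otherwise $\POL(w_1,n)$ is not even well defined as the power of a real-time feasible trace.
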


Let $s_1,\ldots,s_n$ denote the execution trace of our policy when run for $n$ iterations; $w_1,\ldots,w_n$ and $t_1,\ldots,t_n$ respectively denote the corresponding workloads and execution delays.

\subsubsection{Asymptotic power consumption of our policy}
We first calculate the asymptotic power consumption of our policy.
\begin{lemma}[Asymptotic average power consumption of our policy]
\label{lem:pol}
As the time horizon goes to infinity, the average power consumption of our power management policy converges to that of the target speed $\hat s$. That is, 
$\displaystyle\lim_{n\to\infty}\POL(w_1,n)=\spow(\hat s)$.
\end{lemma}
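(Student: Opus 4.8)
The plan is to split the execution trace produced by our policy into a \emph{transient} part (the iterations of the first and second phases) and a \emph{steady-state} part (the third phase), and to argue that the transient contributes only an $O(1)$ amount to both the numerator and the denominator of the average-power ratio, while the steady-state part contributes a term growing linearly in $n$. Writing $m$ for the number of iterations in the first two phases, I would express the average power for $n>m$ as
\[
\POL(w_1,n)=\frac{\sum_{i=1}^m t_i\spow(s_i)+\sum_{i=m+1}^{n} t_i\spow(s_i)}{\sum_{i=1}^m t_i+\sum_{i=m+1}^{n} t_i},
\]
and the aim is to show that every term of the second sum in the numerator equals $\tau\spow(\hat s)$ and every term of the second sum in the denominator equals $\tau$, so that the expression collapses to $\bigl(A+(n-m)\tau\spow(\hat s)\bigr)/\bigl(B+(n-m)\tau\bigr)$, where $A:=\sum_{i=1}^m t_i\spow(s_i)$ and $B:=\sum_{i=1}^m t_i$ are constants independent of $n$.

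The steady-state claim is the easy part and reduces to a one-step fixed-point computation. The single iteration of the second phase runs at speed $\bar w/\tau$ on a workload $\bar w\le\tau$, so its execution delay is exactly $\tau$; hence the workload entering the third phase is $W(\tau)=\hat s\tau$ by the definition of $\tau$. Running at speed $\hat s$ on workload $\hat s\tau$ then yields delay $\hat s\tau/\hat s=\tau$ and next workload $W(\tau)=\hat s\tau$ again, so every third-phase iteration has delay $\tau$ and power $\spow(\hat s)$ --- precisely the ``steady state'' behind the name of $\hat s$. Moreover $A$ and $B$ are finite constants as soon as $m$ is finite, since each transient iteration has delay in $[t_{\min},T]$ and power in $[\spow(0),\spow(1)]$.

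The crux is thus to show that the transient is finite, i.e.\ that the first phase terminates after a bounded number of iterations independent of $n$. During the first phase the speed is fixed at $1$, so $t_i=w_i$ and $w_{i+1}=W(w_i)$. By Observation~\ref{obs:lb} the workload never drops below $t_{\min}$, and by the definition of $t_{\min}$ we have $W(t)<t$ for every $t\in(t_{\min},w_1]$; hence $w_1,w_2,\dots$ is strictly decreasing and bounded below, so by continuity of $W$ it converges to a fixed point $L$ with $W(L)=L$ and $L\le w_1$, which forces $L=t_{\min}$. If $t_{\min}<\tau$, convergence to $t_{\min}$ guarantees $w_i\le\tau$ for some finite $i$ and the first phase ends; this is the generic case. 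The degenerate case $t_{\min}=\tau$ (where the strictly decreasing workloads could approach $\tau$ without ever meeting it) must be handled separately: there $\hat s=W(\tau)/\tau=W(t_{\min})/t_{\min}\ge 1$ since $W(t_{\min})\ge t_{\min}$, and by the standing assumption $\hat s\le 1$ we get $\hat s=1$; then the policy runs at speed $1=\hat s$ in every iteration, so $\POL(w_1,n)=\spow(1)=\spow(\hat s)$ identically and the claim is immediate.

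With $m$ finite, the limit is routine:
\[
\POL(w_1,n)-\spow(\hat s)=\frac{A-B\spow(\hat s)}{B+(n-m)\tau}\xrightarrow[n\to\infty]{}0,
\]
because $\tau\ge t_{\min}\ge\wbase>0$ makes the denominator grow without bound while the numerator stays constant. I expect the only genuine difficulty to be establishing the finiteness of the first phase, and in particular the careful treatment of the degenerate boundary case $t_{\min}=\tau$; everything else is either the direct fixed-point calculation of the third phase or the elementary limit above.
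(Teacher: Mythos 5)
Your proposal is correct, and it shares the paper's overall skeleton: split the trace into a finite transient (phases one and two) and the steady state (phase three), verify by the one-step fixed-point computation that every third-phase iteration has delay $\tau$ and power $\spow(\hat s)$ (this is exactly Claim~\ref{claim:pol2}), and conclude with the elementary limit (the paper uses the cruder bounds ``transient delay $\leq (N+1)T$, transient power $\leq \spow(1)$'' where you compute exactly, but this is immaterial). Where you genuinely diverge is the crux, the finiteness of the first phase. The paper's Claim~\ref{claim:pol1} extracts a uniform contraction factor $m:=\max_{\tau\leq t\leq w_1}\frac{W(t)}{t}<1$ (the maximum exists and lies below $1$ by continuity of $W$ on the compact interval $[\tau,w_1]$, where $W(t)<t$ pointwise), giving the explicit bound $N=\lceil\log_m\frac{\tau}{w_1}\rceil$ on the first-phase length; you instead argue softly that the full-speed workload sequence is strictly decreasing and bounded below, hence convergent, that by continuity of $W$ its limit is a fixed point of $W$, and that the definition of $t_{\min}$ then pins this limit to $t_{\min}$, so that when $t_{\min}<\tau$ the workload must drop below $\tau$ after finitely many iterations. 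Both arguments lean on continuity of $W$ in an essential way; yours is more elementary but purely qualitative, while the paper's contraction argument buys an explicit transient length and hence explicit constants in the $O(1/n)$ error term. Your handling of the boundary case is also organized differently: the paper folds $\hat s=1$ into the ``empty first phase'' case by showing $t_{\min}=w_1\leq\tau$, whereas you isolate $t_{\min}=\tau$, correctly deduce $\hat s=1$, and then assert that the policy runs at speed $1$ in every iteration; that assertion is true but needs the one-line justification the paper supplies, namely that $W(w_1)\geq\hat s w_1=w_1$ forces $t_{\min}=w_1$, hence $\tau=w_1$, so the first phase is empty and the second-phase speed is $\bar w/\tau=1$. (You also leave implicit the real-time feasibility checks recorded in Claims~\ref{claim:pol1} and~\ref{claim:pol2}, i.e., $w_1\leq T$ and $\tau\leq T$; these are one-liners but belong in a complete proof.)
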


\begin{claim}[Bounded length of the first phase]
\label{claim:pol1}
For some constant $N$ that does not depend on $n$, the first phase is completed within $N$ iterations. In addition, the first phase does not violate the real-time constraint.
\end{claim}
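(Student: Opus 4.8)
The plan is to track the workload sequence produced during the first phase, where the system runs at full speed so that $t_i=w_i$ and $w_{i+1}=W(w_i)$, and to show this sequence drops to $\tau$ after decreasing by a uniformly bounded amount each step. First I would dispose of the degenerate case: if $w_1\le\tau$ the first phase is empty and both assertions hold vacuously, so I assume $w_1>\tau$. Since $\tau\in[t_{\min},T]$ by Definition~\ref{def:khat}, this gives $w_1>t_{\min}$, and by the maximality in Definition~\ref{def:lb} every $t\in(t_{\min},w_1]$ satisfies $W(t)<t$. I would also record that $w_1>\tau$ forces $\hat s<1$: if instead $\hat s=1$, then $W(t)\ge t$ on all of $[t_{\min},T]$, so $W(w_1)\ge w_1$ (as $t_{\min}\le w_1\le T$), which by Definition~\ref{def:lb} yields $t_{\min}=w_1$ and hence $\tau\ge t_{\min}=w_1$, contradicting $w_1>\tau$.

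Next, for real-time feasibility I would argue by induction that throughout the first phase $w_i\in(\tau,w_1]$ and the sequence is strictly decreasing. While $w_i>\tau\;(\ge t_{\min})$ we have $w_i\in(t_{\min},w_1]$, so $w_{i+1}=W(w_i)<w_i\le w_1$; the phase continues only as long as $w_{i+1}>\tau$, so the iterate stays in $(\tau,w_1]$. Since the phase runs at full speed, $t_i=w_i\le w_1\le T$, where $w_1\le T$ holds because otherwise even the first iteration at full speed would have $t_1=w_1>T$, so no length-$1$ real-time feasible trace would exist and the system would fail to be sustainable. This settles the second assertion.

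For the constant bound $N$—which I expect to be the crux—strict monotonicity alone is insufficient, so I would extract a uniform per-iteration decrease. Define $\delta:=\min_{t\in[\tau,w_1]}\bigl(t-W(t)\bigr)$; the map $t\mapsto t-W(t)$ is continuous and $[\tau,w_1]$ is compact, so the minimum is attained, and I would verify it is strictly positive pointwise: on $(\tau,w_1]\subseteq(t_{\min},w_1]$ we have $W(t)<t$, while at $t=\tau$ the value equals $(1-\hat s)\tau>0$, using $\hat s<1$ and $\tau\ge t_{\min}>0$. With $\delta>0$ in hand, each first-phase iteration satisfies $w_i-w_{i+1}=w_i-W(w_i)\ge\delta$, hence $w_i\le w_1-(i-1)\delta$, and the workload must fall to $\tau$ or below within $N:=\lceil(w_1-\tau)/\delta\rceil+1$ iterations—a bound depending only on the fixed parameters $w_1,\tau,\delta$ and not on the horizon $n$. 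The main obstacle is exactly the positivity of $\delta$ at the endpoint $\tau$, where the gap $t-W(t)$ degenerates to $0$ when $\hat s=1$; this is resolved by the earlier observation that the nontrivial regime $w_1>\tau$ already forces $\hat s<1$.
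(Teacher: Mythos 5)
Your proof is correct and takes essentially the same route as the paper's: dispose of the degenerate case $w_1\le\tau$ (which absorbs $\hat s=1$), establish $W(t)<t$ pointwise on $[\tau,w_1]$ from the maximality of $t_{\min}$, extract a uniform per-iteration decrease using compactness and the continuity of $W$, and conclude both the bounded length and real-time feasibility (via $w_1\le T$) by induction. The only difference is cosmetic: the paper uses the multiplicative ratio $m:=\max_{\tau\le t\le w_1}W(t)/t<1$, giving geometric decay and $N=\lceil\log_m(\tau/w_1)\rceil$, whereas you use the additive gap $\delta:=\min_{\tau\le t\le w_1}\bigl(t-W(t)\bigr)>0$, giving linear decay and $N=\lceil(w_1-\tau)/\delta\rceil+1$.
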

\begin{proof}
In order to establish the existence of $N$, note that it suffices to show that the first phase is completed within a finite number of iterations: since our algorithm does not assume any knowledge of the time horizon $n$, it is obvious that $N$ does not depend on $n$ as long as the first phase eventually terminates.

There is nothing to prove if $w_1\leq\tau$, since the first phase is then of zero length. Note that this embraces the case where $\hat s=1$: we have $\tau\geq t_{\min}$ by definition, and $t_{\min}=w_1$ when $\hat s=1$ since $W(w_1)\geq \hat s w_1$. Suppose from now that $w_1>\tau$ and $\hat s<1$.

We claim that, for all $t'\in[\tau,w_1]$, $W(t')<t'$. (\textit{Proof.} Suppose there exists $t'\in[\tau,w_1]$ such that $W(t')\geq t'$. Since $W(\tau)=\hat s\tau<\tau$, we have $t'> \tau\geq t_{\min}$ and this contradicts our choice of $t_{\min}$.) Let $m:=\max_{\tau\leq t\leq w_1}\frac{W(t)}{t}$, and we therefore have $m<1$. Intuitively, this $m$ serves as a multiplicative factor that lower bounds the decrease in the workload during the first phase. Thus, it becomes obvious that the first phase eventually terminates. What remains is a rather tedious application of mathematical induction.

Now we show by induction that, if the first phase lasted for at least $\ell$ iterations, $w_{\ell+1}\leq m^\ell \cdot w_1$. The proof is straightforward: firstly, the base case ($\ell=0$) is trivial. If the first phase lasted for at least $\ell=\ell_0$ iterations for $\ell_0\geq 1$, we have $w_{\ell_0}> \tau$ and $s_{\ell_0}=1$  since the $\ell_0$-th iteration is part of the first phase; on the other hand, $t_{\ell_0}=w_{\ell_0}\leq m^{\ell_0-1} \cdot w_1$ holds from the induction hypothesis. This shows $w_{\ell_0+1}=W(t_{\ell_0})\leq W(m^{\ell_0-1}\cdot w_1)\leq m\cdot m^{\ell_0-1}\cdot w_1$, where the first inequality follows from the monotonicity of $W$, and the second from the definition of $m$. Thus, for $N:=\lceil \log_m\frac{\tau}{w_1}\rceil$, the first phase does not last for more than $N$ iterations.

Note that we have $w_1\leq T$, since otherwise it is impossible to meet the real-time constraint in the very first iteration. For each iteration $i$ in the first phase, $t_i\leq m^{i-1}\cdot w_1\leq T$, i.e., the real-time constraint is satisfied.
\end{proof}

\begin{claim}[Real-time feasibility and valid choice of processing speed]
\label{claim:pol2}
Suppose that the time horizon is long enough for the second phase to appear. The processing speed of the second phase is within the valid range, and both the second and third phase of our policy respect the real-time constraint.
\end{claim}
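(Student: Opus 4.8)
The plan is to verify the three assertions by direct computation from the definitions, treating the second phase first and then propagating the conclusion into the third. Before starting I would record that $\tau$ is well defined: since we are in the special case where $W$ is continuous and $[t_{\min},T]$ is compact with $t_{\min}>0$, the ratio $W(t)/t$ is continuous on $[t_{\min},T]$ and attains its minimum $\hat s$ at some $t^\star\in[t_{\min},T]$, which then satisfies $W(t^\star)=\hat s t^\star$. Hence the set $\{t\mid W(t)=\hat s t,\ t_{\min}\le t\le T\}$ is nonempty and contained in $[t_{\min},T]$, so $\tau$ exists with $t_{\min}\le\tau\le T$; in particular the defining equation gives $W(\tau)=\hat s\tau$, the one identity the whole argument hinges on.

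For the valid-speed assertion, let $\bar w$ denote the workload at the start of the second phase. If the first phase is empty then $\bar w=w_1\le\tau$, which is exactly the condition that skips the first phase; otherwise $\bar w\le\tau$ because the first phase runs (at speed $1$) precisely until the workload first drops to at most $\tau$. Since also $\bar w\ge\wbase>0$, the second-phase speed $\bar w/\tau$ lies in $(0,1]$, which is within the valid range. Real-time feasibility of the second phase is then immediate: its single iteration processes workload $\bar w$ at speed $\bar w/\tau$, so its delay is $\bar w/(\bar w/\tau)=\tau\le T$.

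The third phase is where the ``steady state'' is exploited. The workload entering the third phase equals $W$ evaluated at the second-phase delay, namely $W(\tau)=\hat s\tau$. Running this iteration at speed $\hat s$ therefore produces delay $\hat s\tau/\hat s=\tau$, and the workload of the following iteration is again $W(\tau)=\hat s\tau$. A one-line induction then shows that every third-phase iteration carries workload $\hat s\tau$ and delay $\tau\le T$, so the entire third phase respects the real-time constraint; and it runs at speed $\hat s$, which is valid since $\hat s\le1$ under the standing sustainability assumption.

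The computations here are short, and I do not expect a genuine obstacle: the only points needing care are the well-definedness of $\tau$ (resting on the continuity of $W$ assumed throughout this section) and the empty-first-phase edge case for the bound $\bar w\le\tau$. The lone substantive observation is the identity $W(\tau)=\hat s\tau$, which simultaneously pins the second-phase delay at $\tau$ and makes the third phase an exact fixed point of the map $t\mapsto W(t)/\hat s$; everything else is bookkeeping.
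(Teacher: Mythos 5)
Your proof is correct and follows essentially the same route as the paper's: verify $s_i=\bar w/\tau\le 1$ from $\bar w\le\tau$, compute the second-phase delay as exactly $\tau\le T$, and use the fixed-point identity $W(\tau)=\hat s\tau$ with a short induction to show every third-phase iteration has delay $\tau$. Your additional check that $\tau$ is well defined (nonemptiness of the set $\{t\mid W(t)=\hat s t,\ t_{\min}\le t\le T\}$ via compactness and continuity) is a sound supplement that the paper leaves implicit in Definition~\ref{def:khat} and the policy description.
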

\begin{proof}
Let $i$ be the index of the iteration that constitutes the second phase. We have $s_i=\frac{\bar w}{\tau}\leq 1$ since $\bar w\leq \tau$, and $t_i=\frac{\bar w}{s_i}=\tau\leq T$. The first iteration of the third phase has the workload of $w_{i+1}=W(\tau)=\hat s\tau$, and hence $t_{i+1}=\frac{w_{i+1}}{\hat s}=\tau$; repeating this argument shows that the execution delay of every iteration in the third phase is $\tau\leq T$.
\end{proof}

\begin{proof}[Proof of Lemma~\ref{lem:pol}]
The lemma immediately follows from Claims~\ref{claim:pol1} and \ref{claim:pol2}.

The first two phases have their total execution delay bounded from above by $(N+1)T$, and their average power consumption is no greater than $\spow(1)$. On the other hand, the average power consumption of the third phase is exactly $\spow(\hat s)$, and each of its iteration has execution delay of at least $t_{\min}>0$. Thus, by choosing $n$ to be sufficiently large, $\POL(w_1,n)$ becomes arbitrarily close to $\spow(\hat s)$.
\end{proof}

\subsubsection{Asymptotics of the exact optimum}
Now we analyze the asymptotics of the exact optimum $\OPT$.

\begin{lemma}[Asymptotic optimum]
\label{lem:opt}
As the time horizon goes to infinity, the exact optimum converges to the power consumption of the target speed $\hat s$. That is,
$\displaystyle\lim_{n\to\infty}\OPT(w_1,n)=\spow(\hat s)$.
\end{lemma}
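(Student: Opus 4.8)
The plan is to squeeze $\OPT(w_1,n)$ between an upper bound and a lower bound that both tend to $\spow(\hat s)$. The upper bound is immediate from the work already done: the execution trace produced by our policy is itself one particular real-time feasible execution trace of length $n$, so $\OPT(w_1,n)\le\POL(w_1,n)$, and Lemma~\ref{lem:pol} gives $\limsup_{n\to\infty}\OPT(w_1,n)\le\spow(\hat s)$. All the substance therefore lies in the matching lower bound $\liminf_{n\to\infty}\OPT(w_1,n)\ge\spow(\hat s)$, which I would establish for an \emph{arbitrary} real-time feasible execution trace $s_1,\dots,s_n$ with workloads $w_i$ and delays $t_i$.

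The first step is to collapse the average power into a single ``effective speed.'' Since $w_i=s_it_i$, the time-weighted average speed is $\bar s:=\frac{\sum_i t_i s_i}{\sum_i t_i}=\frac{\sum_i w_i}{\sum_i t_i}$, and the convexity of $\spow$ (Jensen's inequality with the weights $t_i$) yields $\frac{\sum_i t_i\spow(s_i)}{\sum_i t_i}\ge\spow(\bar s)$. It then suffices to bound $\bar s$ from below. For this I would invoke the defining inequality $W(t)\ge\hat s\,t$ for all $t\in[t_{\min},T]$ (Definition~\ref{def:khat}) together with $w_{i+1}=W(t_i)$ and telescope:
\[
\sum_{i=1}^n w_i = w_1+\sum_{i=1}^{n-1}W(t_i)\ge w_1+\hat s\sum_{i=1}^{n-1}t_i = w_1-\hat s\,t_n+\hat s\sum_{i=1}^n t_i,
\]
so that $\bar s\ge\hat s+\frac{w_1-\hat s\,t_n}{\sum_i t_i}$.

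The remaining work is to show the correction term vanishes uniformly over all traces. Since $0<t_n\le T$, the numerator is bounded below by the fixed constant $w_1-\hat s\,T$, and Observation~\ref{obs:lb} gives $\sum_i t_i\ge n\,t_{\min}$ with $t_{\min}>0$; hence $\bar s\ge\hat s-\eps_n$ for some $\eps_n\to 0$ that depends only on $n$ and the system parameters, not on the particular trace. Because $\spow$ is nondecreasing and continuous, $\spow(\bar s)\ge\spow(\hat s-\eps_n)$ for all large $n$, and $\spow(\hat s-\eps_n)\to\spow(\hat s)$; taking the infimum over all length-$n$ traces preserves this, giving $\OPT(w_1,n)\ge\spow(\hat s-\eps_n)$. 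Combined with the upper bound, the squeeze yields the claim.

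I expect the main obstacle to be the lower bound, and within it the careful treatment of the two boundary terms — the initial workload $w_1$ and the final delay $t_n$ — which are exactly what prevent $\bar s$ from being at least $\hat s$ on the nose; the key point is that together they perturb $\bar s$ by only $O(1/n)$, a gap that the continuity of $\spow$ absorbs in the limit. The one subtlety to watch is that $\eps_n$ must be chosen uniformly over all traces (using $t_n\le T$ and $\sum_i t_i\ge n\,t_{\min}$) so that the bound survives passing to the infimum that defines $\OPT$.
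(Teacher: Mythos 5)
Your proposal is correct and takes essentially the same route as the paper's proof: the upper bound via $\OPT(w_1,n)\le\POL(w_1,n)$ together with Lemma~\ref{lem:pol}, and the lower bound via Jensen's inequality on the time-weighted average speed, the inequality $W(t)\ge\hat s\,t$ over the admissible delay range, a uniform $O(1/n)$ bound on the boundary error, and continuity of $\spow$ at $\hat s$. The only cosmetic differences are that the paper discards $w_1$ by monotonicity and bounds delays below by $\wbase$ rather than $t_{\min}$, whereas you telescope while keeping $w_1$; the substance is identical.
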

\begin{proof}
Let $\eps>0$ be an arbitrary positive number. Since $\OPT(w_1,n)\leq\POL(w_1,n)$, Lemma~\ref{lem:pol} implies that there exists $N_1\in\mathbb{N}$ such that $\OPT(w_1,n)\leq \POL(w_1,n)<\spow(\hat s)+\eps$ for all $n>N_1$. Hence, it suffices to show that there exists $N_2\in\mathbb{N}$ such that $\OPT(w_1,n)>\spow(\hat s)-\eps$ for all $n>N_2$.

Consider an arbitrary execution trace $s^\mystar_1,\ldots,s^\mystar_n$. Let $P^\mystar$ denote its average power consumption; $w^\mystar_1,\ldots,w^\mystar_n$ and $t^\mystar_1,\ldots,t^\mystar_n$ denote its workloads and execution delays, respectively. We then have\begin{equation}\label{e:lem:opt:1}
P^\mystar
=\frac{\sum_{i=1}^nt^\mystar_i\spow(\frac{w^\mystar_i}{t^\mystar_i})}{\sum_{i=1}^nt^\mystar_i}
\geq\spow(\frac{\sum_{i=1}^nw^\mystar_i}{\sum_{i=1}^nt^\mystar_i})
=\spow(\frac{w_1+\sum_{i=1}^{n-1}W(t^\mystar_i)}{\sum_{i=1}^nt^\mystar_i})
\geq\spow(\frac{\sum_{i=1}^{n-1}W(t^\mystar_i)}{\sum_{i=1}^nt^\mystar_i})
,
\end{equation}where the first inequality follows from the convexity of $\spow$ and the second from monotonicity.

For all $i=1,\ldots,n$, we have $\wbase\leq t^\mystar_i\leq T$; this further implies $W(t^\mystar_i)\leq T$ for all $i=1,\ldots,n-1$. Rewriting \eqref{e:lem:opt:1}, we obtain\begin{eqnarray*}
P^\mystar&\geq&\spow(\frac{\sum_{i=1}^{n-1}W(t^\mystar_i)}{\sum_{i=1}^nt^\mystar_i})\\
&=&\spow(\frac{(\sum_{i=1}^{n-1}W(t^\mystar_i))\cdot[(\sum_{i=1}^{n}t^\mystar_i)-t^\mystar_n]}{(\sum_{i=1}^nt^\mystar_i)\cdot(\sum_{i=1}^{n-1}t^\mystar_i)})\\
&=&\spow(\frac{(\sum_{i=1}^{n-1}W(t^\mystar_i))\cdot\cancel{(\sum_{i=1}^{n}t^\mystar_i)}}{\cancel{(\sum_{i=1}^nt^\mystar_i)}\cdot(\sum_{i=1}^{n-1}t^\mystar_i)}-\frac{t^\mystar_n(\sum_{i=1}^{n-1}W(t^\mystar_i))}{(\sum_{i=1}^nt^\mystar_i)(\sum_{i=1}^{n-1}t^\mystar_i)}
)\\
&=&\spow(\frac{\sum_{i=1}^{n-1}W(t^\mystar_i)}{\sum_{i=1}^{n-1}t^\mystar_i}-\frac{t^\mystar_n\sum_{i=1}^{n-1}W(t^\mystar_i)}{(\sum_{i=1}^{n}t^\mystar_i)\cdot(\sum_{i=1}^{n-1}t^\mystar_i)})\\
&\geq&\spow(\hat s-\frac{t^\mystar_n\sum_{i=1}^{n-1}W(t^\mystar_i)}{(\sum_{i=1}^{n}t^\mystar_i)\cdot(\sum_{i=1}^{n-1}t^\mystar_i)})
,\end{eqnarray*}where
the first equality follows from $\sum_{i=1}^{n-1}t^\mystar_i=(\sum_{i=1}^{n}t^\mystar_i)-t^\mystar_n$, and the last inequality from the monotonicity of $\spow$ and the choice of $\hat s$: note that $\sum_{i=1}^{n-1}W(t^\mystar_i)=\sum_{i=1}^{n-1}s^\mystar_i t^\mystar_i\geq\hat s\sum_{i=1}^{n-1}t^\mystar_i$. Since we have $t^\mystar_n\leq T$, $\sum_{i=1}^{n-1}W(t^\mystar_i)\leq (n-1)T$, and $t^\mystar_i\geq\wbase$ for all $i$, this leads to\begin{eqnarray*}
P^\mystar&\geq&\spow(\hat s-\frac{t^\mystar_n\sum_{i=1}^{n-1}W(t^\mystar_i)}{(\sum_{i=1}^{n}t^\mystar_i)\cdot(\sum_{i=1}^{n-1}t^\mystar_i)})\\
&\geq&\spow(\hat s-\frac{T^2}{n\wbase^2})
,\end{eqnarray*}again from the monotonicity of $\spow$.
Finally, we obtain\[
\OPT(w_1,n)\geq \spow(\hat s-\frac{T^2}{n\wbase^2})
,\]as the above bound holds for any arbitrary execution trace.

On the other hand, since $\spow(s)$ is continuous at $s=\hat s>0$ and nondecreasing, there exists some $\delta>0$ (and $\delta\leq\hat s$) such that $\spow(s)>\spow(\hat s)-\eps$ for all $s\in(\hat s-\delta,1]$. Choosing $N_2:=\frac{T^2}{\delta \wbase^2}$ concludes the proof.

\end{proof}

Theorem~\ref{thm:main} follows from Lemmas~\ref{lem:pol} and \ref{lem:opt}.

\section{Experiments}
\label{sec:case}

In this section, we experimentally evaluate the performance of our power management policy. We perform a case study on a motion/object tracking application to observe how the delay-workload dependence exhibits itself in a CPS. We measure this delay-workload dependence, and apply this to our model along with the power consumption characteristics measured from Exynos5422. We validate the efficiency of our power management policy via comparisons with other approaches.

\subsection{Experimental Setup}
\label{subsec:env}

We choose Odroid-XU3 \cite{odroidxu3} as the target cyber platform, which incorporates Exynos5422 with 2GB main memory running Linux operating system (Ubuntu 14.10). Exynos5422 System-on-a-Chip is a big.LITTLE Octa-core system with 4 big (Cortex-A15) and 4 little cores (Cortex-A9), each of which can be individually and dynamically modulated in operating frequency.

Note that our model does not assume any specific DVFS model and is flexible about the power consumption characteristics used. Thus, rather than resorting to a theoretically derived model of DVFS, we experimentally measure the actual power dissipation of a big core in Exynos5422. In order for this, we pick up a benchmark \emph{blowfish}, which is known to impose a high degree of computational overhead on a CPU \cite{guthaus2001mibench}, and run it on a big core of Exynos5422 repeatedly with different operating frequencies. For precise measurement, we instruct the Linux governor to force the other cores off, and obtain 20 data points as shown in Fig.~\ref{fig:power}. We applied the techniques introduced in Section~\ref{subsec:modelgen:monoconv} in order to cope with the finite number of operation modes. The resulting power characteristics function $P$ is depicted together in Fig.~\ref{fig:power}.

\begin{figure}[tb]
\centering
\includegraphics[width=11.0cm, trim =0 1cm 0 0]{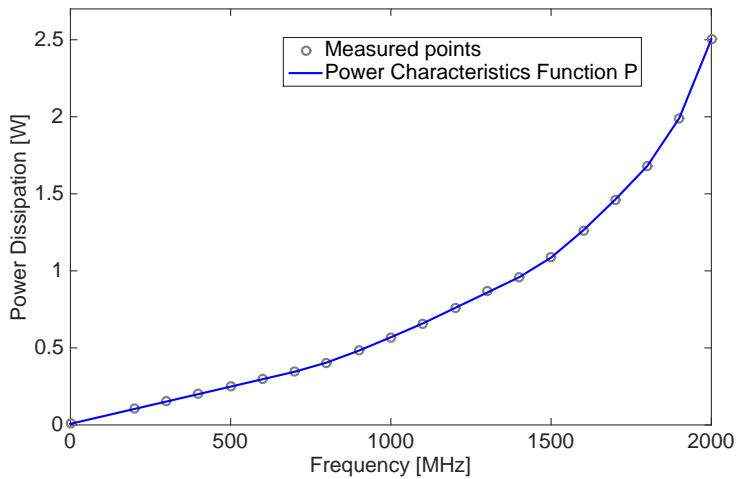}
\caption{Power consumption measurements of a big core in Exynos5422 running blowfish benchmark.}
\label{fig:power}
\end{figure}

\subsection{Delay-Workload Dependence}
\label{subsec:w_exp}

In this subsection, we observe and quantify the delay-workload dependence that emerges in CPS. Motion tracking, or object tracking, is widely used in CPSs~\cite{1656418, 5509287, kanade2004real, coifman1998real, 6155714, sonntag2013towards, you2001fusion} to reflect the changes in the physical world into the internal model that the cyber system maintains, and it is one of the typical sources of delay-workload dependence. We profile the performance of Lucas-Kanade method~\cite{bouguet2001pyramidal}, using the implementation provided in OpenCV Library~\cite{opencv}.

Consider the following scenario in an object tracking CPS, in which the sensor (camera), takes images from the physical world, and the actuator moves the orientation of the camera to enable it to track an object. At the beginning, it is known that the object of interest is located near the center of the scene. As an initial cost, the cyber system takes an image from the physical world through the camera, and selects features to track. This task is done by calling OpenCV APIs \api{cvCvtColor()}, \api{cvGoodFeaturesToTrack()}, and \api{cvFindCornerSubPix()}. The initial workload $w_1$ corresponds to the work done by these function calls.
Later, the system iteratively tracks the target object by repeatedly invoking OpenCV APIs \api{cvCvtColor()} and \api{cvCalcOpticalFlowPyrLK()}. 

Using \emph{a priori} knowledge on the maximum possible speed of the object, we can
calculate the maximum distance that the object could have moved between two iterations. This maximum distance, which is given as an increasing function of the iteration delay, enables us to deduce a bounding box to which the optical flow calculation can safely be limited. In this experiment, we assume that the object speed never exceeds 10~pixels per millisecond; the minimum search area is given as a box of $125\times 125$ pixels. If an iteration takes $t$ milliseconds for instance, the search area for the next iteration is given as a square with the side length of $(10\cdot 2 \cdot t + 125)$ pixels.

We use $(1000\times 1000)$-pixel images to run Lucas-Kanade method and measure the workload incurred. We fix the operating frequency of the big core to the maximum (2GHz), and only one core is activated as the algorithm is single-threaded. We individually measured the workload for search areas of sizes $125\times 125$, $145\times 145$, $\ldots$, $785\times 785$. This provides us with the delay-workload dependence profile shown in Fig.~\ref{fig:workload}.

A piecewise linear, nondecreasing function $W$ is derived from these 34 measurements, as highlighted by the blue curve. We plot a straight line, $w=t$, to provide a visual reference. This line represents the maximum amount of work that can be done by the microprocessor running at the full speed ($s=1$). We also highlight the real-time constraint $T=25\mathrm{ms}$ as a vertical dashed line.

\begin{figure}[tb]
\centering
\includegraphics[width=11.0cm, trim =0 1cm 0 0]{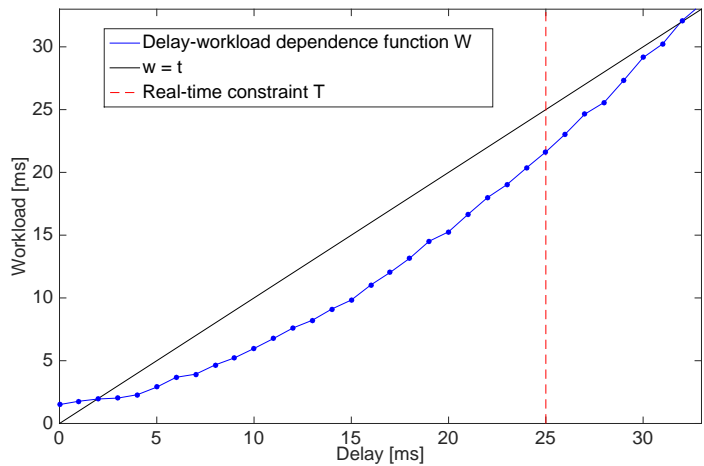}
\caption{Delay-workload dependence observed in Lucas-Kanade method.}
\label{fig:workload}
\end{figure}

\subsection{Power Management Policy}
\label{subsec:p_exp}

In this subsection, we validate the optimality of the proposed power management policy.
We conduct power simulations of Lucas-Kanade algorithm considering the delay-workload dependence provided in Fig.~\ref{fig:workload} with the DVFS modes characterized in Fig.~\ref{fig:power}. The real-time constraint is set to $25\mathrm{ms}$, as shown in Fig.~\ref{fig:workload}.

In order to facilitate comparative investigation, we evaluated four other policies. The first policy is ALAP (see \cite{aydin2004power} for an example of the ALAP approach), where the speed is chosen as the slowest possible while satisfying the real-time constraint. The second policy is ASAP, the other extreme: under this policy, we fix the processing speed to the maximum. The last set of policies, denoted as Heuristic~1 and Heuristic~2, are described in Section~\ref{subsec:heuristic}.

We simulate the five power management policies, i.e., ASAP, ALAP, Heuristic~1, Heuristic~2, and the proposed policy, for six different time horizons: 10, 100, 1000, 10000, 100000, 1000000.
The only exception is Heuristic 2, which violates the real-time constraint at the 67th iteration and therefore is prematurely halted for the last five cases.
We also remark that none of these policies, of course, did not know in advance when they would be terminated.

Table~\ref{tab:pwr} shows the average power consumption achieved by the five power management policies.
Our policy starts outperforming all other approaches when $n\geq100$. In the case of a short time horizon on the other hand ($n=10$), ALAP showed the best average power consumption.

In ASAP, the speed is always set to the maximum, thus the delay converges to $t_{\min}=1.94222\mathrm{ms}$ (see Definition~\ref{def:lb}). In other words, ASAP is the most responsive policy. ALAP, on the contrary, sacrifices the responsiveness in favor of power efficiency; it enters a steady state whose execution delay is exactly equal to the real-time constraint to achieve an asymptotic power consumption of $1.52927\mathrm{W}$. Note that, however, the power efficiency of ALAP is inferior to the proposed policy since it fails to find the better steady state that the proposed policy uses.

The target speed of the proposed policy is identified as $\hat s= 0.56085$. We can observe that, as the time horizon gets longer, the average power consumption gets smaller, converging to $\spow(\hat s)= 0.68087$. As mentioned above, Heuristic~1 was prematurely halted due to real-time constraint violation; Heuristic~2, on the other hand, did not violate the real-time constraint and its average power consumption appears to stay around $\sim 1.15$. See Section~\ref{subsec:heuristic} for further discussion on these heuristics.

\begin{table}[t]
\centering
\caption{Comparison of average power consumptions}
\label{tab:pwr}
\vspace{1ex}\small
\begin{tabular}{l}
\begin{tabular}{r|ccccc}
\hline
 \diagbox{$n$}{{Policy}} &  ASAP & ALAP & Heuristic 1 & Heuristic 2 & \textbf{Proposed} \\ \hline\hline
         10& 2.50325& 1.51550& 2.33322$\phantom{^*}$& 2.36455& \textbf{1.72204}	\\ \hline
        100& 2.50325& 1.52789& 1.12819$^*$          & 1.47065& \textbf{0.82071}	\\ \hline
       1000& 2.50325& 1.52913& 1.12819$^*$          & 1.19146& \textbf{0.69535}	\\ \hline
      10000& 2.50325& 1.52925& 1.12819$^*$          & 1.15731& \textbf{0.68232}	\\ \hline
     100000& 2.50325& 1.52926& 1.12819$^*$          & 1.15465& \textbf{0.68102}	\\ \hline
    1000000& 2.50325& 1.52927& 1.12819$^*$          & 1.15434& \textbf{0.68088}	\\ \hline
\end{tabular}
\vspace{1ex}\\
$^*$Prematurely halted due to the real-time constraint violation.\hfill\mbox{ }
\end{tabular}
\end{table}

\subsection{Heuristics}
\label{subsec:heuristic}

In this subsection, we explore the possibilities of extracting some key ideas that underlie our power management policy, and applying it to devise heuristics that can be used in varied settings. In particular, we will consider a setting in which the power management policy is deprived of its access to the model parameters including the power consumption characteristics $P$ and the delay-workload relation $W$. In fact, the only parameter we assume that the heuristics will be aware of is the real-time constraint $T$. We present a simple heuristic that is inspired by the present power management policy, and works under this limited setting.

Our power management policy, in one line, is to ``find a steady-state point $t$ that minimizes $\frac{W(t)}{t}$''. Even though heuristics under the limited setting does not have access to $W$, it can ``retrospectively'' estimate it: once an iteration, say the $i$-th iteration, completes, we can estimate $W(t_{i-1})=w_i=s_it_i$ by measuring the execution delay $t_i$. Based on this, we can devise the following simple heuristic.

Let $\sigma(t):=\frac{W(t)}{t}$. Recall that our ``objective'' is to minimize $\sigma(t)$. With this in mind, our heuristic estimates $\sigma_i:=\sigma(t_i)$ by $\sigma_i=\frac{W(t_i)}{t_i}=\frac{s_{i+1}t_{i+1}}{t_i}$. Thus, at the end of iteration $i$, the last $\sigma$ we can estimate is $\sigma_{i-1}$.

The heuristic is quite simple: we start with the full speed ($s_i=1$). At any point of the execution, the heuristic has its internal ``intention'' about whether it wants to increase or decrease the processing speed. Initially, this intention is set to \emph{decreasing} since it is the only choice. When a new iteration begins, we reassess the intention. We compare the last two estimates of $\sigma$ to see if we are ``happy'', i.e., $\sigma$ is decreasing, or ``unhappy'', i.e., $\sigma$ is increasing. If we are happy, we do nothing; if unhappy, we flip our ``intention''. A final piece of adjustment is that we perform this reassessment every three iterations, in order to avoid noisy behaviors resulting from the heuristic being too sensitive.

Fig.~\ref{fig:h1} shows the performance of this heuristic, called Heuristic~1, where we increase or decrease the processing speed by $0.01$ at each iteration. As can be seen from the figure, Heuristic~1 rapidly decreases the processing speed to approach the ``true optimum''.
However, it unfortunately overshoots the optimum and tries to recover from the 55th iteration, but eventually fails in the 67th iteration as it violates the real-time constraint.
Taking a closer look, we can see that the policy increases the processing speed until the 58th iteration, at which point it falsely attributes the deteriorating $\sigma$ to its intention and flips it; even though the policy again starts increasing the processing speed at the 64th iteration, the workload has already grown too large by this point and the policy thereby fails to recover.
This failure is largely due to the limitation imposed by the lack of information: without the complete view of the system parameters, the heuristic fails to timely and properly act to recover when it overshoots the optimum.

\begin{figure}[tb]
\centering
\includegraphics[width=12cm, trim =0 1cm 0 1cm]{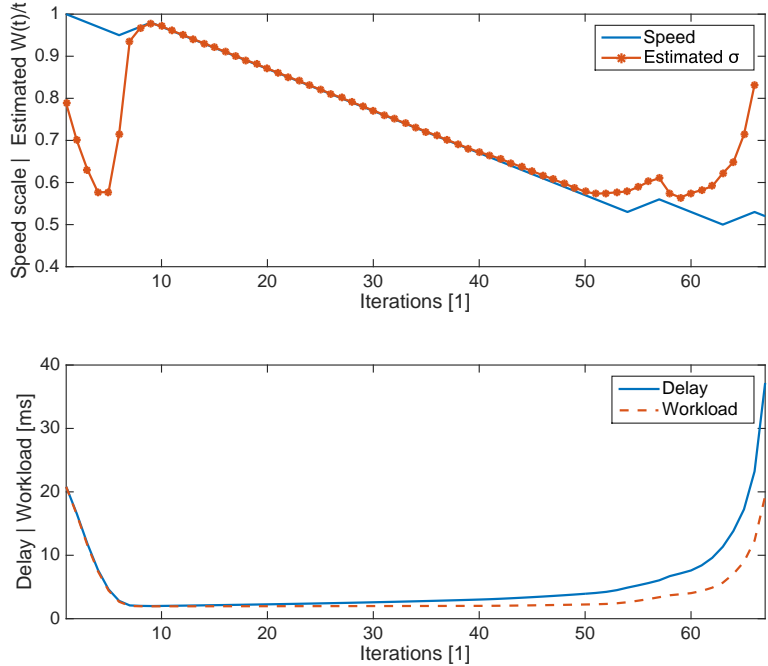}
\caption{Trajectory of speed, estimated $\sigma$, delay, and workload in the execution of Heuristic~1.}
\vspace{-.5em}
\label{fig:h1}
\end{figure}

In order to fix this issue of the belated action, we modify Heuristic~1 by making the increase/decrease in the processing speed asymmetric: when the modified heuristic decreases the processing speed, it decreases it by $0.01$, but when it increases, it does so by $0.1$. This compensates for the belated action by favoring ``escaping out of overload'' over ``making maximum use of low speed''. Fig.~\ref{fig:h2} shows the performance of this modified heuristic, called Heuristic~2, when it is run for 600 iterations until halted from outside. We can see that now the real-time constraint is respected during the entire test run.

\begin{figure}[tb]
\centering
\includegraphics[width=12cm, trim =0 1cm 0 1cm]{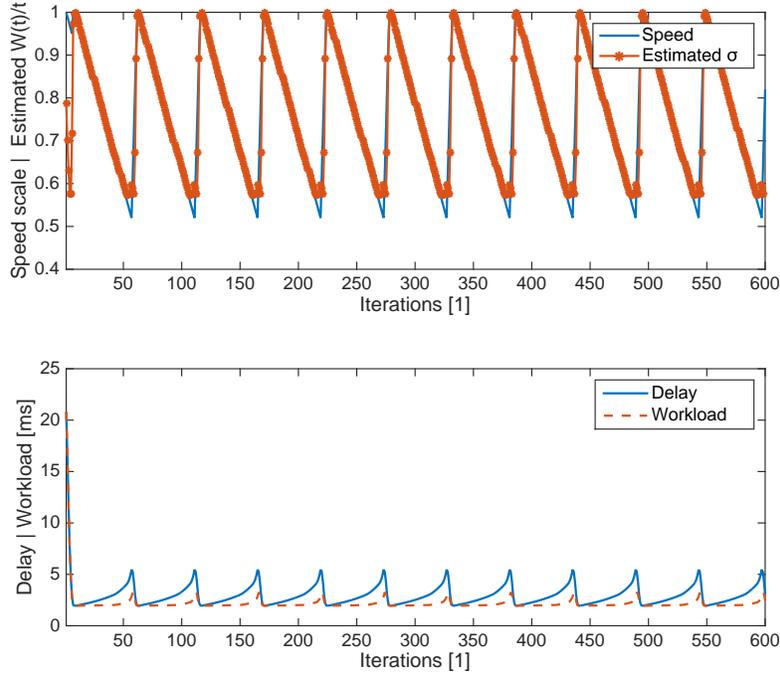}
\caption{Trajectory of speed, estimated $\sigma$, delay, and workload in the execution of Heuristic~2.}
\vspace{-.5em}
\label{fig:h2}
\end{figure}

\section{Power Optimization: the General Case}
\label{sec:genproof}

\subsection{Introduction}
\label{subsec:intro_gen}

In Section~\ref{sec:opt}, we presented a power management policy and its analysis under the simplifying assumption that $W$ is continuous. In this section, we present our results for the general case. This generalization is a key enabler for the application of the proposed policy to the multi-media application domain, where the workload changes in the unit of block in a discrete manner.

Another important use of this generalized policy is as a lightweight substitution of the continuous-case policy of Section~\ref{sec:opt}.
Obtaining an accurate delay-workload relation is a costly operation even though it needs to be performed only once at design time.
In order to save this effort, one can profile the given system only for a few data points to obtain a ``safe upper bound'' of the true delay-workload relation.
Since this upper bound needs to be conservative, a reasonable approach would be to extend these data points into a discontinuous staircase function.
Section~\ref{subsec:genexp} illustrates the operation of our generalized policy in the context of this usage.

The overall organization of this section is quite similar to that of Section~\ref{sec:opt}; in fact, we can draw an almost one-to-one correspondence between the two sections. Recall that the analysis of our policy in Section~\ref{sec:opt} started with bounding the range of execution delays (Observation~\ref{obs:lb}); this section starts with the same, except that it is slightly generalized to cope with discontinuity (Theorem~\ref{thm:reach}). Corollary~\ref{cor:algreach} in the present section shows that the first and second phase still have a bounded length, generalizing Claim~\ref{claim:pol1} of Section~\ref{sec:opt}. The target speed $\hat s$ is again very similarly chosen, but one technicality that exists only in this general case is that the minimum may not exist. If only the infimum exists, instead of choosing a single target speed, we choose a converging sequence of target execution delays. (Speeds are replaced with delays for technical reasons; see Definition~\ref{defn:tds}.) Then, the generalized policy is basically the same as Section~\ref{sec:opt}: we run at the full speed until we reach the target execution delays. Since the target is now defined as an infinite sequence of execution delays, this may not be a steady-state in general, but the analysis shows that they lead to a ``near-steady'' state, achieving asymptotic optimality.

\subsection{Reachability}
We first need to revise our characterization of execution delays that can appear on execution traces: this subsection presents the strengthened counterparts of Definition~\ref{def:lb} and Observation~\ref{obs:lb}.

\begin{defn}[Reachability]
We say an execution delay $t$ is \emph{reachable} if there exists a real-time feasible execution trace $s_1,\ldots,s_n$ with execution delays $t_1,\ldots,t_n$ such that $t_j=t$ for some $j$.
\end{defn}

Now we give a characterization of the reachable delays. Let $R_o:=\{t\mid 0<t<w_1\textrm{ and }\forall t'\in (t,T]\ W(t')>t\}$ and $R_c:=\{t\mid 0<t\leq w_1\textrm{ and }\forall t'\in [t,T]\ W(t')\geq t\}$. Note that both are nonempty. Let $R:=\left(\cap_{t\in R_o}(t,T]\right)\cap\left(\cap_{t\in R_c}[t,T]\right)$. The following theorem shows that $R$ is the desired characterization.

\begin{thm}[Characterization of reachable delays]
\label{thm:reach}
$\tilde t$ is reachable if and only if $\tilde t\in R$.
\end{thm}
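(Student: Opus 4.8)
The plan is to recast reachability directly in terms of the delay recursion and then prove the two inclusions separately, the ``only if'' direction mirroring Observation~\ref{obs:lb} and the ``if'' direction supplying the matching construction that Definition~\ref{def:lb} only foreshadows. Since $t_i=\frac{w_i}{s_i}$ with $s_i\in(0,1]$, $w_1\ge\wbase$, $w_{i+1}=W(t_i)$, and the real-time constraint reads $t_i\le T$, a finite sequence $t_1,\ldots,t_m$ arises as the delay sequence of some real-time feasible execution trace if and only if $t_1\in[w_1,T]$ and $t_{i+1}\in[W(t_i),T]$ for every $i$ (choosing each $s_i$ to realize the prescribed delay). As any trace may be truncated at the position of $\tilde t$, the delay $\tilde t$ is reachable exactly when such a chain terminates at $\tilde t$. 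Writing $\mathcal R$ for the set of reachable delays, I would first record two structural facts: $[w_1,T]\subseteq\mathcal R$, and $\mathcal R$ is upward closed within $(0,T]$, since the last delay of a realizing chain may be replaced by any larger value not exceeding $T$. Consequently $\mathcal R$ is an interval with right endpoint $T$, and with $\ell:=\inf\mathcal R$ we have $(\ell,T]\subseteq\mathcal R$.

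\emph{Necessity} ($\mathcal R\subseteq R$) is the easy direction. A reachable $\tilde t$ satisfies $\tilde t\le T$ automatically, so membership $\tilde t\in R$ reduces to $\tilde t>t$ for all $t\in R_o$ and $\tilde t\ge t$ for all $t\in R_c$. Both hold by the same induction as in Observation~\ref{obs:lb}: for $t\in R_o$, the base case is $t_1\ge w_1>t$ and the inductive step is $t_{i+1}\ge W(t_i)>t$, where $t_i\in(t,T]$ lets us invoke the defining property of $R_o$; the argument for $t\in R_c$ is identical with $\ge$ replacing $>$.

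\emph{Sufficiency} ($R\subseteq\mathcal R$) is where the real work lies. If $\tilde t\ge w_1$ then $\tilde t\in[w_1,T]\subseteq\mathcal R$, so assume $\tilde t<w_1$. I would first eliminate the non-descending case: if $W(w_1)\ge w_1$, monotonicity gives $W(t')\ge w_1$ on $[w_1,T]$, so $w_1\in R_c$ and every element of $R$ must be $\ge w_1$, contradicting $\tilde t<w_1$. Hence $W(w_1)<w_1$, and the greedy descent $u_0:=w_1$, $u_{k+1}:=W(u_k)$ is nonincreasing (by monotonicity of $W$), bounded below by $\wbase$, and realizable; it therefore converges to some $u^*\ge\wbase>0$ with $u^*<w_1$. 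Passing to the limit yields the fixed-point relation $\inf_{t'>u^*}W(t')=u^*$, so $W(t')\ge u^*$ for all $t'\in(u^*,T]$; a short induction then shows every reachable delay is $\ge u^*$, and since the $u_k\downarrow u^*$ are themselves reachable, $\ell=u^*$.

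It remains to place $\tilde t$ relative to $\ell=u^*$, and this endpoint bookkeeping is the crux. If $\tilde t>\ell$ then $\tilde t\in(\ell,T]\subseteq\mathcal R$. If $\tilde t=\ell$, then $\tilde t\notin R_o$ (otherwise $\tilde t>\tilde t$), so the failure of the $R_o$-condition produces $t^\circ\in(\tilde t,T]$ with $W(t^\circ)\le\tilde t$; as $t^\circ>\ell$ is reachable and $W(t^\circ)\le\tilde t\le T$, one more iteration lands on $\tilde t$, giving $\tilde t\in\mathcal R$. Finally, the case $\tilde t<\ell$ must be ruled out by exhibiting a separating witness: inspecting $W(u^*)$, if $W(u^*)=u^*$ then $u^*\in R_c$, whereas if $W(u^*)<u^*$ then no $t'\in(u^*,T]$ can satisfy $W(t')=u^*$ (such a $t'$ would let us reach $u^*$ and then step below $\ell$, contradicting $\ell=\inf\mathcal R$), so $W(t')>u^*$ strictly on $(u^*,T]$ and $u^*\in R_o$; in either subcase the witness $u^*>\tilde t$ contradicts $\tilde t\in R$. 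The main obstacle throughout is precisely this determination of whether $\ell$ is attained and whether the witness falls in $R_o$ or in $R_c$: the discontinuity of $W$ is what forces the split into the open and closed conditions, and carefully matching the two is the ``tedious'' bookkeeping this section is designed to carry out.
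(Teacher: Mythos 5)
Your necessity direction is essentially the paper's own induction, but your sufficiency direction takes a genuinely different route: the paper never introduces the greedy full-speed descent at all. It works directly with the set $\bar R$ of reachable delays, observes that $\bar R$ is upward closed within $(0,T]$ (so $\bar R=[\min\bar R,T]$ or $\bar R=(\inf\bar R,T]$) and that $W(\bar R)\cap(0,T]\subset\bar R$, and then shows that $\min\bar R\in R_c$ in the first case and $\inf\bar R\in R_o$ in the second, which immediately contradicts $\tilde t\in R$ for any unreachable $\tilde t$. Notably, the paper's argument nowhere uses monotonicity of $W$ (the paper points this out right after the proof, since that assumption is later replaced in Section~\ref{sec:modelgen}), whereas your construction leans on monotonicity throughout; your proof is therefore tied to the stated setting and is strictly less general. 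More importantly, it contains a genuine gap.

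The gap is the step ``a short induction then shows every reachable delay is $\ge u^*$.'' Your fixed-point relation controls $W$ only on the \emph{open} interval: $W(t')\ge u^*$ for $t'\in(u^*,T]$. In the strict-descent case one may have $W(u^*)<u^*$, so the induction step starting from a delay $t_i=u^*$ yields nothing, and a subsequent delay below $u^*$ is not excluded. Whether a reachable delay can equal $u^*$ is precisely what is at stake here, so the argument is circular at this point. The gap is also load-bearing, not cosmetic: your treatment of the case $\tilde t<\ell$ with $W(u^*)<u^*$ (ruling out $W(t')=u^*$ on $(u^*,T]$ by ``reach $u^*$, then step below $\ell$'') presupposes $\ell\ge u^*$, i.e., the very claim whose proof is incomplete. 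The fix is to strengthen the induction hypothesis and compare each trace against the greedy sequence itself rather than its limit: for every real-time feasible trace, $t_i\ge u_{i-1}$ for all $i$ (base: $t_1\ge w_1=u_0$; step: $t_{i+1}\ge W(t_i)\ge W(u_{i-1})=u_i$ by monotonicity of $W$). Since $u_{i-1}\ge u^*$ always, and $u_{i-1}>u^*$ for every $i$ in the strict-descent case, this gives $\ell\ge u^*$ and, in the strict-descent case, shows no reachable delay equals $u^*$ at all; with this patch the rest of your endpoint analysis goes through.
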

\begin{proof}
($\Rightarrow$, every reachable delay is in $R$.) Consider an arbitrary $\tilde t\notin R$ ($\tilde t\leq T$). This implies that either there exists $t\in R_o$ such that $t\geq \tilde t$ or there exists $t\in R_c$ such that $t>\tilde t$ (or both). Let $s_1,\ldots, s_n$ and $t_1,\ldots,t_n$ be the processing speeds and delays of an \emph{arbitrary} real-time feasible execution trace.

\emph{Case 1.} $\exists t\in R_o\ t\geq \tilde t$. We will show by induction that $t_i\in (t,T]$ for all $i$, which implies that $\tilde t$ is not reachable. The base case is easy, since $t_1\in[w_1,T]\subset (t,T]$. Assuming the claim holds for $i=i_0$, we have $t_{i_0+1}\geq w_{i_0+1}=W(t_{i_0})>t$ as desired.

\emph{Case 2.} $\exists t\in R_c\ t>\tilde t$. We use a similar argument: in this case we show $t_i\in [t,T]$ for all $i$. The base case again is easy to see from $t_1\in[w_1,T]\subset [t,T]$. If the claim holds for $i=i_0$, we have $t_{i_0+1}\geq w_{i_0+1}=W(t_{i_0})\geq t$.

($\Leftarrow$, every delay in $R$ is reachable.) Consider an arbitrary $\tilde t\in R$. Let $\bar R$ be the set of reachable execution delays, and we will show that $\tilde t\in \bar R$. Observe that $t\in\bar R$ implies $t'\in\bar R$ for all $t'\in[t,T]$: since there exists an execution trace with $t_j=t$ for some $j$, scaling $s_j$ by a multiplicative factor of $t/t'$ and truncating the trace at the end of iteration $j$ yield a feasible execution trace with $t_j=t'$. Moreover, $W(\bar R)\cap (0,T]\subset \bar R$. (\textit{Proof.} Suppose that, for some $t\in \bar R$, $W(t)\in (0,T]$. Since $t\in\bar R$, there exists an execution trace that reaches $t$. Truncate this execution trace right after the iteration with delay $t$, and add one more iteration with speed $1$. Note that this new iteration has the execution delay of $W(t)$.)
Finally, $\bar R\neq\emptyset$ since $[w_1,T]\subset \bar R$. Suppose towards contradiction that $\tilde t\notin \bar R$.

\emph{Case 1.} $\bar R=[\min \bar R,T]$. We then have $\tilde t<\min\bar R$ since $\tilde t\notin \bar R$. On the other hand, we have $W(\bar R)\subset [\min\bar R,\infty)$ and therefore $\min\bar R\in R_c$ by definition (note that $\min\bar R\leq w_1$). This gives $\tilde t\notin R$.

\emph{Case 2.} $\bar R=(\inf \bar R,T]$. In this case $\tilde t\leq \inf\bar R$. Since $W(\bar R)\subset (\inf\bar R,\infty)$, we have $\inf\bar R\in R_o$ which in turn implies $\tilde t\notin R$.
\end{proof}

So far we have not used the monotonicity of $W$; we will however use it in what follows in order to devise a clean algorithmic way to reach $t\in R$.

Before we do this, we make some useful observations first. Given a real-time feasible execution trace, increasing its speeds preserves feasibility since it does not increase any execution delays:
\begin{obs}[Closedness of feasibility with respect to speed increase]
\label{obs:full}
Let $s_1,\ldots,s_n$ be a real-time feasible execution trace with delays $t_1,\ldots,t_n$. Suppose we change some of $s_i$'s to $1$, obtaining a new execution trace $s'_1,\ldots,s'_n$ with delays $t'_1,\ldots,t'_n$. We have $t'_i\leq t_i$ for all $i$ (and therefore the new trace also is real-time feasible).
\end{obs}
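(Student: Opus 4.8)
The plan is to prove both inequalities $w'_i \leq w_i$ and $t'_i \leq t_i$ simultaneously by induction on $i$, exploiting the fact that workload and delay are coupled through the recurrence $t_i = w_i/s_i$ and $w_{i+1} = W(t_i)$. The only property of the modification I would actually use is that it never decreases a speed: for every $i$ we have $s'_i \geq s_i$ (either $s'_i = 1 \geq s_i$ for a changed index, or $s'_i = s_i$ otherwise). As a preliminary remark I would record that every speed occurring in a real-time feasible trace is strictly positive, since $t_i \leq T$ forces $s_i \geq w_i/T \geq \wbase/T > 0$; this ensures the divisions below are well defined and that $s'_i \geq s_i > 0$ throughout. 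Notice that, as a byproduct, the argument actually establishes the claim for any speed increase, not merely for resetting speeds to $1$.

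For the base case I would observe that both traces share the same initial workload, $w'_1 = w_1$, so $t'_1 = w_1/s'_1 \leq w_1/s_1 = t_1$. For the inductive step, assuming $t'_i \leq t_i$, I would invoke the monotonicity of $W$ to obtain $w'_{i+1} = W(t'_i) \leq W(t_i) = w_{i+1}$, and then combine this with $s'_{i+1} \geq s_{i+1} > 0$ to conclude $t'_{i+1} = w'_{i+1}/s'_{i+1} \leq w_{i+1}/s'_{i+1} \leq w_{i+1}/s_{i+1} = t_{i+1}$. This closes the induction and yields $t'_i \leq t_i$ for all $i$. Real-time feasibility of the new trace is then immediate, since $t'_i \leq t_i \leq T$ for every $i$.

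There is no genuine obstacle here; the single point requiring care is keeping the two monotone quantities correctly aligned within one pass of the recurrence, namely first feeding the decreased delay $t'_i$ into the nondecreasing $W$ to decrease the next workload, and only then dividing by the (possibly larger) speed $s'_{i+1}$. Both effects push the delay in the same direction, so no cancellation arises and no assumption beyond the monotonicity of $W$ is needed.
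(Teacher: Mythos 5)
Your proof is correct and is essentially the argument the paper has in mind: the paper compresses the entire induction into the single line ``Trivial from the monotonicity of $W$,'' and your write-up simply makes that induction (base case $w'_1=w_1$, inductive step via $W(t'_i)\leq W(t_i)$ and $s'_{i+1}\geq s_{i+1}>0$) explicit. No gap; your observation that the argument works for any pointwise speed increase, not just resetting speeds to $1$, is a harmless strengthening.
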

\begin{proof}
Trivial from the monotonicity of $W$.
\end{proof}

Theorem~\ref{thm:reach} along with this observation gives an algorithmic procedure to reach an arbitrary reachable delay $\tilde t\in R$: fix the system at the full speed until the iteration whose workload $w_i$ drops below or equal to $\tilde t$; choose $s_i:=\frac{w_i}{\tilde t}$ so that the delay $t_i$ becomes exactly $\tilde t$.
\begin{cor}[Algorithmic reachability]
\label{cor:algreach}
The above procedure produces a real-time feasible execution trace that reaches $\tilde t$ in the last iteration.
\end{cor}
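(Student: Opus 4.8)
The plan is to split the claim into its two assertions---that the procedure yields a \emph{real-time feasible} trace, and that its final iteration has delay exactly $\tilde t$---and to reduce both to the interplay between Theorem~\ref{thm:reach} and Observation~\ref{obs:full}. The crux is controlling the full-speed prefix: I must show that while the system runs at speed $1$ the workload stays within the real-time budget $T$ and in fact decreases to at most $\tilde t$ after finitely many iterations, so that the corrective assignment $s_i := \frac{w_i}{\tilde t}$ is eventually triggered. Given that, the single corrective iteration and the trivial feasibility of the tail are just bookkeeping.

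To handle the prefix, I would first note that at full speed each delay equals the current workload, so $t_i = w_i$ and the workloads obey the recursion $w_{i+1} = W(w_i)$; these are exactly the quantities the procedure generates before it stops. Since $\tilde t \in R$, Theorem~\ref{thm:reach} guarantees $\tilde t$ is reachable, so there is a real-time feasible trace attaining delay $\tilde t$ at some iteration; I truncate it there and call its length $m$. Applying Observation~\ref{obs:full} to raise every one of its speeds to $1$ yields a still-feasible all-full-speed trace whose delays only shrink, so its delay at iteration $m$ is at most $\tilde t$. This all-full-speed trace starts from the same initial workload $w_1$ and follows the same recursion $w_{i+1}=W(w_i)$ as our procedure, hence coincides with the procedure's prefix; therefore every prefix delay is at most $T$ and the workload has dropped to at most $\tilde t$ by iteration $m$, so the procedure stops at some index $j \le m$.

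It then remains to check the corrective iteration. At the stopping index $j$ we have $0 < \wbase \le w_j \le \tilde t$, so $s_j := \frac{w_j}{\tilde t} \in (0,1]$ is a valid speed and produces delay $t_j = \frac{w_j}{s_j} = \tilde t$; since $\tilde t \in R \subseteq (0,T]$, this respects the real-time constraint. Combined with the feasibility of the prefix established above, the whole trace is real-time feasible and reaches $\tilde t$ in its last iteration.

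I expect the only genuine difficulty to be this termination-and-feasibility argument for the full-speed prefix. The naive route would be to prove termination directly by showing $W(t) < t$ for all $t$ between $\tilde t$ and $w_1$, mirroring the continuous-case Claim~\ref{claim:pol1}; but such a pointwise inequality can fail once $W$ is merely nondecreasing and possibly discontinuous. Routing the argument instead through a \emph{known} reachable trace (Theorem~\ref{thm:reach}) and the speed-increase closure (Observation~\ref{obs:full}) sidesteps any appeal to continuity, which is precisely what the general case requires.
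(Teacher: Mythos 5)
Your proposal is correct and follows essentially the same route as the paper's proof: both invoke Theorem~\ref{thm:reach} to obtain a witness trace for $\tilde t$, apply Observation~\ref{obs:full} to turn it into a real-time feasible all-full-speed trace whose delays only shrink, identify that trace with the procedure's full-speed prefix, and then observe that the final slowed-down iteration hits delay exactly $\tilde t$ without harming feasibility. Your added details (the recursion $w_{i+1}=W(w_i)$, the bound $\wbase \le w_j \le \tilde t$ validating $s_j\in(0,1]$, and the remark on why a pointwise $W(t)<t$ argument would fail for discontinuous $W$) are all consistent elaborations of the paper's more terse argument.
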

\begin{proof}
Consider an arbitrary real-time feasible execution trace that witnesses $\tilde t$. Such a trace is guaranteed to exist by Theorem~\ref{thm:reach}. Now, setting all the speeds to 1 yields an execution trace that is real-time feasible and has an iteration whose delay is at most $\tilde t$ (see Observation~\ref{obs:full}). Truncate this execution trace right after first such iteration, and decrease the speed of this last iteration so that its delay becomes exactly $\tilde t$. Note that this decrease does not harm feasibility, and that this exactly corresponds to the execution trace produced by the above procedure.
\end{proof}

\subsection{Target Speed}
Now the definition of the target speed (corresponding to Definition~\ref{def:khat} of the continuous case) is generalized as follows:
\begin{defn}[Target speed]
\label{def:newtarget}
Within the reachable range of execution delays, the target speed $\hat s$ is defined as the infimum workload-to-delay ratio. That is,
$\hat s:=\inf_{t\in R}\frac{W(t)}{t}$.
\end{defn}
Observation~\ref{obs:sustain} extends to this new generalized definition of $\hat s$: the system is not sustainable if $\hat s>1$. Hence, we will assume $\hat s\leq 1$ in what follows. Note that $\hat s>0$ since $\wbase>0$.

Before we present our power management policy generalized for arbitrary nondecreasing workload functions, it may be helpful to review the continuous variant in a slightly different presentation. Recall that, in the continuous variant, the execution delays of the third phase formed a constant sequence $\tau,\tau,\cdots$, and this achieved the desired power consumption $\spow(\hat s)$. The first two phases were simply to initiate this steady state. Our generalized policy works in basically the same way: the only subtlety comes from the fact that there may be no execution delay $\tau$ whose steady state achieves $\hat s$, i.e., $\min_{t\in R}\frac{W(t)}{t}$ may be undefined. Thus, our new generalized policy will use an infinite sequence that converges to the target speed in lieu of $\tau,\tau,\cdots$.

\begin{defn}[Target delay sequence]
\label{defn:tds}
Let $\eps_s,\eps_t>0$ be two positive parameters to be chosen later. We say an infinite sequence $(\tau_i)_{i\in\mathbb{Z}_{\geq 0}}$ is a \emph{target delay sequence} if the following hold:
\begin{enumerate}
\item $\tau_i\in R$ for all $i$;\label{prop:target:feas}
\item $\frac{W(\tau_i)}{\tau_i}<\hat s+\eps_s$ for all $i$, and $\left(\frac{W(\tau_i)}{\tau_i}\right)_{i\in\mathbb{Z}_{\geq 0}}$ converges to $\hat s$;\label{prop:target:kconv}
\item there exists some $\hat\tau\in \cl(R)$ such that $|\tau_i-\hat\tau|<\eps_t$ for all $i$ and $(\tau_i)_{i\in\mathbb{Z}_{\geq 0}}$ converges to $\hat\tau$.\label{prop:target:tconv}
\end{enumerate}
\end{defn}

While, at first glance, this definition might look more complicated than it actually is, identifying a target delay sequence is in fact very simple for most conceivable applications. For example, if $\min_{t\in R}\frac{W(t)}{t}$ exists, we can simply take an infinite \emph{constant} sequence of $\min_{t\in R}\frac{W(t)}{t}$, which is exactly what we did in the continuous case. If $W$ is discontinuous but piecewise continuous, a target delay sequence can be given as either an infinite constant sequence or a sequence converging to one of the discontinuities, where any such sequence will be admissible as long as the first term starts sufficiently close to $\hat\tau$.

\vspace{-.5em}
\subsection{Proposed Policy}
Let $s_1,s_2,\ldots$ be the infinite sequence whose prefix of length $n$ corresponds to the execution trace of our policy when it is run for $n$ iterations. Likewise, let $t_1,t_2,\ldots$ denote the infinite sequence of its execution delays.

When $\hat s=1$, our policy is simply fixing the system at its full speed: $s_i=1$ for all $i$.

When $\hat s<1$, for some target delay sequence $(\tau_i)_{i\in\mathbb{Z}_{\geq 0}}$ where we choose $\eps_s:=\frac{1-\hat s}{3}$ and $\eps_t:=\eps_s\wbase$, our policy first invokes Corollary~\ref{cor:algreach} to reach $\tau_0$. This corresponds to the first two phases of the continuous counterpart. Let $\baseindex$ denote the index of the iteration we reach $\tau_0$: $t_{\baseindex}=\tau_0$. The processing speeds of the following iterations (corresponding to the third phase) are chosen so that the execution delays from then form prefixes of the target delay sequence, i.e., $s_{\baseindex+i}:=\frac{W(\tau_{i-1})}{\tau_i}$ for all $i=1,2,\cdots$.

\vspace{-.5em}
\subsection{Feasibility}
Let us verify the feasibility of our policy. Firstly, consider the case when $\hat s=1$. Our policy produces a real-time feasible execution trace as long as there exists one, as can be seen from Observation~\ref{obs:full}. In the interest of completeness, we also provide the following characterization by which we can determine the system's sustainability.
\begin{lemma}[Sustainability]
\label{lem:1sustain}
A given system is sustainable if and only if\begin{itemize}
\item there exists $t\in[w_1,T]$ such that $W(t)\leq t$, or
\item there exists $t\in(w_1,T]$ such that $W(t')<t$ for all $t'\in[w_1,t)$.
\end{itemize}
\end{lemma}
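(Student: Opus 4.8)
The plan is to bypass the machinery of $R$ and $\hat s$ entirely and reduce sustainability to the behaviour of a single one-dimensional orbit. By Observation~\ref{obs:full}, raising all speeds to $1$ never increases any execution delay, so among all length-$n$ traces the all-full-speed trace has the componentwise smallest delays, and these delays are exactly the orbit $d_1:=w_1$, $d_{i+1}:=W(d_i)$ of the map $W$. Hence a feasible length-$n$ trace exists if and only if $d_1,\dots,d_n\le T$, and the system is sustainable if and only if the entire orbit stays within $T$. (If $w_1>T$ no length-one trace is feasible, both bullets are vacuously false, and there is nothing to prove; so I assume $w_1\le T$ from here on.) This reduces the lemma to a statement purely about $(d_i)$.

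Next I would use monotonicity of $W$ to show the orbit is monotone. If $W(w_1)\le w_1$, then induction gives $d_{i+1}=W(d_i)\le W(d_{i-1})=d_i$, so the orbit is nonincreasing, hence bounded by $w_1\le T$ and sustainable; moreover $W(w_1)\le w_1$ with $w_1\in[w_1,T]$ is exactly the first bullet. If instead $W(w_1)>w_1$, the same induction makes the orbit nondecreasing, and sustainability is then equivalent to its boundedness by $T$, i.e.\ to convergence to some limit $L\le T$. So the whole content lives in the nondecreasing, convergent case.

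For the forward direction (sustainable $\Rightarrow$ one of the bullets) I would split this case in two. If the orbit is eventually constant at $L$, then $W(L)=L$ with $L\in(w_1,T]$, giving the first bullet. Otherwise it is strictly increasing with every term below $L$; then for each $t'\in[w_1,L)$ there is some $d_i\ge t'$, whence $W(t')\le W(d_i)=d_{i+1}<L$, which is precisely the second bullet with $t:=L\in(w_1,T]$. For the reverse direction I would check that each bullet caps the orbit: under the first bullet with witness $t^{*}$, induction from $d_1=w_1\le t^{*}$ yields $d_{i+1}=W(d_i)\le W(t^{*})\le t^{*}\le T$; under the second bullet with witness $t^{*}$, induction keeps $d_i\in[w_1,t^{*})$, since $d_i<t^{*}$ forces $W(d_i)<t^{*}$. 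In either case the orbit never exceeds $T$, so the system is sustainable.

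The main obstacle is the strictly-increasing sub-case, which is exactly the phenomenon that forces the general (discontinuous) treatment: here $W(t)>t$ may hold for every $t$, so there is no fixed point and no attained minimum of $W(t)/t$, yet the orbit still converges to a limit $L$ at which $W$ may jump. Getting the quantifiers right is the crux — the half-open interval $[w_1,t)$ and the strict inequality $W(t')<t$ in the second bullet are precisely what let it describe a steady limit approached ``from below'' across a discontinuity, without ever evaluating $W$ at the possibly ill-behaved point $L$ itself. Verifying that this converging orbit yields the clause, and that the clause in turn caps the orbit, is the one place where the argument is genuinely non-mechanical; everything else is routine induction.
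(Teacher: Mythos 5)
Your proof is correct and takes essentially the same route as the paper's: both reduce sustainability to the full-speed orbit $d_1=w_1$, $d_{i+1}=W(d_i)$ via Observation~\ref{obs:full}, prove the reverse direction by the same capping inductions, and obtain the forward direction from the supremum/limit of that orbit, producing the first bullet when it is attained and the second when it is not. The only difference is packaging: the paper works directly with $t_{\sup}$ and the attained/unattained dichotomy without ever noting monotonicity of the orbit, whereas you first split on $W(w_1)\le w_1$ versus $W(w_1)>w_1$ to make the orbit monotone and then invoke monotone convergence --- the same argument with slightly more structure.
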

\noindent
The proof of this lemma uses a similar argument as Theorem~\ref{thm:reach} and is deferred to the end of this section.

Now we will focus on the case where $\hat s<1$. To begin with, the following lemma shows that a target delay sequence is guaranteed to exist. Its proof is given at the end of this section.
\begin{lemma}[Existence of a target delay sequence]
\label{lem:targetexists}
There always exists a target delay sequence.
\end{lemma}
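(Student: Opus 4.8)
The plan is to build the target delay sequence as a tail of a convergent subsequence of a \emph{minimizing} sequence for $\hat s$, using the compactness of $\cl(R)$ to guarantee the delays themselves converge. With this construction the three requirements of Definition~\ref{defn:tds} fall out almost for free: reachability is inherited from the minimizing sequence, convergence of the ratios to $\hat s$ is immediate from the minimizing property, and convergence of the delays is exactly what the subsequence extraction buys.

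First I would record that $R$ is bounded. Since $R\subseteq(0,T]$, Theorem~\ref{thm:reach} tells us that $R=\bar R$ is an interval of the form $[\inf R,T]$ or $(\inf R,T]$, so $\cl(R)=[\inf R,T]$ is a compact subset of $\mathbb{R}$. Next, by the definition $\hat s=\inf_{t\in R}\frac{W(t)}{t}$, I would choose a minimizing sequence $(\sigma_j)_j\subseteq R$ with $\frac{W(\sigma_j)}{\sigma_j}\to\hat s$. Because $(\sigma_j)_j$ lies in the bounded set $R$, the Bolzano--Weierstrass theorem yields a convergent subsequence $\sigma_{j_k}\to\hat\tau$; since each $\sigma_{j_k}\in R$ and $\cl(R)$ is closed, the limit $\hat\tau\in\cl(R)$. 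The ratios along this subsequence still converge to $\hat s$, as a subsequence of a convergent sequence shares its limit.

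It then remains to enforce the two strict bounds. Since $\frac{W(\sigma_{j_k})}{\sigma_{j_k}}\to\hat s$ and $\sigma_{j_k}\to\hat\tau$, there is an index $k_0$ beyond which both $\frac{W(\sigma_{j_k})}{\sigma_{j_k}}<\hat s+\eps_s$ and $|\sigma_{j_k}-\hat\tau|<\eps_t$ hold simultaneously; discarding the first $k_0$ terms and re-indexing from $0$ produces a sequence $(\tau_i)_{i\in\mathbb{Z}_{\geq0}}$. I would then verify the three properties in turn: property~\ref{prop:target:feas} holds because every $\tau_i$ equals some $\sigma_{j_k}\in R$; property~\ref{prop:target:kconv} holds because the ratios converge to $\hat s$ and, after the truncation, all lie strictly below $\hat s+\eps_s$ (they are in $[\hat s,\hat s+\eps_s)$, being at least $\hat s$ by the infimum); and property~\ref{prop:target:tconv} holds with witness $\hat\tau\in\cl(R)$, since $\tau_i\to\hat\tau$ and, again after truncation, $|\tau_i-\hat\tau|<\eps_t$.

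The one genuine subtlety --- and the very reason the definition is phrased in terms of a whole sequence rather than a single point --- is that the infimum defining $\hat s$ need not be attained within $R$: the minimizing delays may only accumulate at the boundary point $\hat\tau=\inf R$ lying \emph{outside} $R$. Passing to $\cl(R)$ and extracting a convergent subsequence is precisely what allows $\hat\tau$ to sit harmlessly on the boundary while every $\tau_i$ stays inside $R$. Everything else is routine once boundedness of $R$ is in hand, which is the only place the structural description of $R$ from Theorem~\ref{thm:reach} is actually needed.
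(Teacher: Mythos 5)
Your proof is correct and follows essentially the same route as the paper: take a minimizing sequence for $\hat s$ in $R$, extract a convergent subsequence via the Bolzano--Weierstrass theorem (boundedness being immediate since $R\subseteq(0,T]$), and truncate so that the strict bounds $\frac{W(\tau_i)}{\tau_i}<\hat s+\eps_s$ and $|\tau_i-\hat\tau|<\eps_t$ hold, noting that Properties~\ref{prop:target:feas} and~\ref{prop:target:kconv} survive passing to subsequences. The only cosmetic differences are the order of the truncation and subsequence-extraction steps and your (unneeded) appeal to the interval structure of $R$ for boundedness.
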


Finally, it remains to verify that the processing speeds are validly chosen.
\begin{lemma}[Validity of the proposed policy]
Processing speeds chosen by our policy are all valid. That is, $s_i\in(0,1]$ for all $i$.
\end{lemma}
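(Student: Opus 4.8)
The plan is to verify the two regimes of the policy separately. When $\hat s = 1$ the policy sets $s_i = 1$ for every $i$, so validity is immediate. For $\hat s < 1$, I would treat the initial segment produced by Corollary~\ref{cor:algreach} (which reaches $\tau_0$) separately from the steady-phase iterations indexed by $\baseindex + i$. In the initial segment all speeds equal $1$ except the last one, which the procedure of Corollary~\ref{cor:algreach} sets to $w_\baseindex / \tau_0$; since $\baseindex$ is the first iteration whose workload falls to $\tau_0$ or below, we have $0 < w_\baseindex \leq \tau_0$, so this speed lies in $(0,1]$. Thus the entire prefix up to iteration $\baseindex$ is valid.

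The substance of the lemma lies in the third-phase speeds $s_{\baseindex + i} = W(\tau_{i-1})/\tau_i$. Positivity is immediate from $W(\tau_{i-1}) \geq \wbase > 0$ and $\tau_i > 0$, so the real task---and the step I expect to be the main obstacle---is the upper bound $s_{\baseindex + i} \leq 1$, i.e.\ $W(\tau_{i-1}) \leq \tau_i$. This is precisely where the specific parameter choices $\eps_s = (1 - \hat s)/3$ and $\eps_t = \eps_s \wbase$ come into play.

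To establish this bound I would combine three ingredients. First, Property~\ref{prop:target:kconv} gives $W(\tau_{i-1}) < (\hat s + \eps_s)\tau_{i-1}$. Second, Property~\ref{prop:target:tconv} places both $\tau_{i-1}$ and $\tau_i$ within $\eps_t$ of $\hat\tau$, hence $\tau_{i-1} < \tau_i + 2\eps_t$; together with $\hat s + \eps_s = (2\hat s + 1)/3 < 1$ this yields $W(\tau_{i-1}) < (\hat s + \eps_s)\tau_i + 2\eps_t$. Third, since $\tau_i \in R$ is a reachable delay it satisfies $\tau_i \geq \wbase$ (a delay equals a workload divided by a speed in $(0,1]$, and the workload never drops below $\wbase$), so $2\eps_t = 2\eps_s \wbase \leq 2\eps_s \tau_i$. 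Substituting gives $W(\tau_{i-1}) < (\hat s + 3\eps_s)\tau_i = \tau_i$, the last equality because $3\eps_s = 1 - \hat s$. The delicate point is that the three terms must telescope to exactly $(\hat s + 3\eps_s)\tau_i$, which is what the factor of $3$ in the definition of $\eps_s$ is engineered to deliver: any larger slack would break the final cancellation and the upper bound could fail.
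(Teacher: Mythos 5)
Your proof is correct, and it shares the paper's skeleton---dispose of the prefix via Corollary~\ref{cor:algreach}, then bound the third-phase speeds $s_{\baseindex+i}=W(\tau_{i-1})/\tau_i$ using Properties~\ref{prop:target:kconv} and~\ref{prop:target:tconv} of Definition~\ref{defn:tds} together with the lower bound $\wbase$ on delays---but the algebraic route is genuinely different. The paper factors $s_{\baseindex+i}=\frac{W(\tau_{i-1})}{\tau_{i-1}}\cdot\frac{\tau_{i-1}}{\tau_i}$ and bounds both factors \emph{multiplicatively}, arriving at $(\hat s+\eps_s)\cdot\frac{\hat\tau+\eps_t}{\hat\tau-\eps_t}\leq\frac{(1+2\hat s)(4-\hat s)}{3(2+\hat s)}$, and then needs a small calculus argument (the function $f(s)=\frac{(1+2s)(4-s)}{3(2+s)}$ is nondecreasing and $f(1)=1$) to conclude the bound is at most $1$. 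You instead bound the drift \emph{additively}, $\tau_{i-1}<\tau_i+2\eps_t$, and convert $2\eps_t=2\eps_s\wbase$ into $2\eps_s\tau_i$ via $\tau_i\geq\wbase$, so the terms telescope exactly to $(\hat s+3\eps_s)\tau_i=\tau_i$. What your route buys: it eliminates the function-analysis step and exhibits the constant $3$ in $\eps_s=(1-\hat s)/3$ as precisely the value that closes the cancellation, whereas in the paper's multiplicative accounting the same constant leaves slack (its bound is strictly below $1$ whenever $\hat s<1$). Your justification of $\tau_i\geq\wbase$ (every reachable delay is a workload divided by a speed in $(0,1]$, and workloads never drop below $\wbase$) matches the paper's use of $\min\cl(R)\geq\wbase$. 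One point worth keeping explicit: your step $(\hat s+\eps_s)\cdot 2\eps_t<2\eps_t$ relies on $\hat s+\eps_s<1$, i.e.\ on $\hat s<1$, which does hold here because this branch of the policy is only invoked when $\hat s<1$.
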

\begin{proof}
It suffices to verify the claim for each $i>\baseindex$ from Corollary~\ref{cor:algreach}. For all $i=1,2,\cdots$, we have\[
s_{\baseindex+i}:=\frac{W(\tau_{i-1})}{\tau_i}=\frac{W(\tau_{i-1})}{\tau_{i-1}}\cdot\frac{\tau_{i-1}}{\tau_{i}}<(\hat s+\eps_s)\cdot\frac{\hat\tau+\eps_t}{\hat\tau-\eps_t}
,\]where the last inequality follows from Definition~\ref{defn:tds}. Since $\min\cl(R)\geq\wbase$, we have $\eps_t\leq\eps_s\hat\tau$; thus,\[
s_{\baseindex+i}<(\hat s+\frac{1-\hat s}{3})\cdot\frac{1+\frac{1-\hat s}{3}}{1-\frac{1-\hat s}{3}}=\frac{(1+2\hat s)(4-\hat s)}{3(2+\hat s)}\leq 1
,\]where the last inequality is verified as follows: let $f:[0,1]\to\mathbb{R}$ be a function such that $f(s)=\frac{(1+2 s)(4- s)}{3(2+ s)}$. Since $f'(s)=\frac{2(1-s)(5+s)}{3(2+s)^2}$, $f$ is nondecreasing; on the other hand, $f(1)=1$.
\end{proof}

\subsection{Asymptotic Power Optimality}
We show that the proposed policy is asymptotically power-optimal, this time for general $W$. Let $\POL(w_1,n)$ denote the average power consumption of our policy when run for $n$ iterations, and $\OPT(w_1,n)$ denote the infimum average power consumption of the real-time feasible execution traces of length $n$. In the rest of this section, we will show the following theorem, which is the generalized counterpart of Theorem~\ref{thm:main}.
\begin{thm}[Asymptotic optimality of the proposed policy]
\label{thm:ncmain}
The difference between the infimum average power consumption and the average power consumption of our power management policy tends to zero as the time horizon goes to infinity. That is,
$\displaystyle\lim_{n\to\infty} \left[\POL(w_1,n)-\OPT(w_1,n)\right] = 0$.
\end{thm}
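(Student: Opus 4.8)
The plan is to mirror the structure of the continuous-case proof (Theorem~\ref{thm:main}), establishing two matching asymptotic statements: that $\POL(w_1,n)\to\spow(\hat s)$ and that $\OPT(w_1,n)\to\spow(\hat s)$, from which the theorem follows immediately by subtraction. The generalization forces me to replace the single steady-state delay $\tau$ by the target delay sequence $(\tau_i)$, so every place the continuous proof used a constant $\tau$ must now tolerate the $\eps_s,\eps_t$ slack built into Definition~\ref{defn:tds}. I would handle the two limits as two separate lemmas exactly as in Section~\ref{subsec:anal}.

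For the upper bound $\limsup_n\POL(w_1,n)\le\spow(\hat s)$, I would first dispose of the $\hat s=1$ case (where $\spow(\hat s)=\spow(1)$ is trivially an upper bound). For $\hat s<1$, Corollary~\ref{cor:algreach} together with the bounded-length argument of Claim~\ref{claim:pol1} guarantees that the initial phase reaching $\tau_0$ terminates within a constant number $N$ of iterations, each with delay at most $T$ and power at most $\spow(1)$; this is the generalized counterpart of Claims~\ref{claim:pol1} and~\ref{claim:pol2}. From iteration $\baseindex+1$ onward the speed is $s_{\baseindex+i}=\frac{W(\tau_{i-1})}{\tau_i}$, so the delay is $t_{\baseindex+i}=\frac{W(\tau_{i-1})}{s_{\baseindex+i}}=\tau_i\ge\wbase$ and the power is $\spow\!\left(\frac{W(\tau_{i-1})}{\tau_i}\right)$. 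Using property~\ref{prop:target:kconv} of Definition~\ref{defn:tds}, $\frac{W(\tau_{i-1})}{\tau_i}=\frac{W(\tau_{i-1})}{\tau_{i-1}}\cdot\frac{\tau_{i-1}}{\tau_i}\to\hat s\cdot 1=\hat s$, and by continuity of $\spow$ the per-iteration power converges to $\spow(\hat s)$. Since the delays stay in $[\wbase,T]$, a Cesàro/weighted-average argument shows the average power over the whole trace converges to $\spow(\hat s)$: the bounded contribution of the first $N$ iterations is washed out as $n\to\infty$, and the tail average converges to $\spow(\hat s)$ because a weighted average of a convergent sequence with bounded positive weights converges to the same limit.

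For the lower bound $\liminf_n\OPT(w_1,n)\ge\spow(\hat s)$, I would reuse the chain of inequalities \eqref{e:lem:opt:1} from Lemma~\ref{lem:opt} essentially verbatim, since that derivation only used convexity and monotonicity of $\spow$, the relation $w^\mystar_{i+1}=W(t^\mystar_i)$, the bounds $\wbase\le t^\mystar_i\le T$, and the defining inequality $\sum_{i=1}^{n-1}W(t^\mystar_i)=\sum_{i=1}^{n-1}s^\mystar_i t^\mystar_i\ge\hat s\sum_{i=1}^{n-1}t^\mystar_i$. The key point is that this last inequality still holds with the generalized $\hat s=\inf_{t\in R}\frac{W(t)}{t}$: every delay $t^\mystar_i$ appearing on a feasible trace is reachable, hence lies in $R$ by Theorem~\ref{thm:reach}, so $\frac{W(t^\mystar_i)}{t^\mystar_i}\ge\hat s$. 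Thus the same estimate yields $\OPT(w_1,n)\ge\spow\!\left(\hat s-\frac{T^2}{n\wbase^2}\right)$, and continuity of $\spow$ at $\hat s$ drives this to $\spow(\hat s)$.

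The main obstacle I anticipate is not either individual limit but the bookkeeping needed to make the $\POL$ upper bound rigorous once the steady state is replaced by the non-constant sequence $(\tau_i)$. In the continuous case every third-phase iteration contributed exactly $\spow(\hat s)$ with exactly delay $\tau$, so the average was immediate; now each iteration contributes a slightly different $\spow\!\left(\frac{W(\tau_{i-1})}{\tau_i}\right)$ with a slightly different delay $\tau_i$, and I must argue that the weighted average of these still converges to $\spow(\hat s)$ rather than merely staying within $\spow(\hat s+\eps_s)$. The cleanest route is to fix an arbitrary $\eps>0$, use property~\ref{prop:target:kconv} to find an index beyond which every per-iteration power lies within $\eps$ of $\spow(\hat s)$, and then bound the finitely many earlier terms (both the constant-length initial phase and the finite initial segment of the target sequence) by $\spow(1)$ times their total delay, which is $O(1)$ against a denominator that grows like $n\wbase$. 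I would also need to confirm real-time feasibility throughout, but this is handed to me by property~\ref{prop:target:feas} ($\tau_i\in R$, hence $\tau_i\le T$) and the validity lemma already proved, so it adds no genuine difficulty.
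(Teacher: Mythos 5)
Your proposal is correct and follows essentially the same route as the paper: the paper likewise splits the theorem into Lemma~\ref{lem:ncpol} ($\POL(w_1,n)\to\spow(\hat s)$, proved via $s_{\baseindex+i}=\frac{W(\tau_{i-1})}{\tau_{i-1}}\cdot\frac{\tau_{i-1}}{\tau_i}\to\hat s$, continuity of $\spow$, and the delays being bounded in a positive interval so the bounded initial segment washes out) and Lemma~\ref{lem:ncopt} ($\OPT(w_1,n)\to\spow(\hat s)$, obtained by rerunning the proof of Lemma~\ref{lem:opt} verbatim, which is legitimate exactly for the reason you identify: Theorem~\ref{thm:reach} places every delay of a feasible trace in $R$, so $\frac{W(t^\mystar_i)}{t^\mystar_i}\geq\hat s$ still holds for the generalized target speed). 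One minor citation nit: the convergence $\frac{\tau_{i-1}}{\tau_i}\to 1$ rests on property~\ref{prop:target:tconv} ($\tau_i\to\hat\tau$ with $\hat\tau\geq\wbase>0$) rather than property~\ref{prop:target:kconv}, but your argument invokes exactly the right facts.
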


We again begin with calculating $\lim_{n\to\infty}\POL(w_1,n)$.

\begin{lemma}[Asymptotic power consumption of the proposed policy]
\label{lem:ncpol}
As the time horizon goes to infinity, the average power consumption of our power management policy converges to that of the target speed $\hat s$. That is,
$\displaystyle\lim_{n\to\infty}\POL(w_1,n)=\spow(\hat s)$.
\end{lemma}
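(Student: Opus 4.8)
The plan is to mirror the proof of Lemma~\ref{lem:pol} from the continuous case, with one essential modification: the third phase is now only a \emph{near}-steady state, so its per-iteration power merely \emph{converges} to $\spow(\hat s)$ instead of being identically $\spow(\hat s)$. I would therefore replace the step ``the third phase has average power exactly $\spow(\hat s)$'' by a weighted-average (Ces\`aro-type) argument that washes out a bounded prefix while controlling a convergent tail. The trivial case $\hat s=1$ is handled first: the policy sets $s_i=1$ for all $i$, so every iteration consumes power $\spow(1)=\spow(\hat s)$ and $\POL(w_1,n)=\spow(\hat s)$ identically. From now on assume $\hat s<1$.

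Next I would bound the first two phases. By Corollary~\ref{cor:algreach}, running at full speed until the workload drops to at most $\tau_0$ and then scaling the speed reaches $\tau_0$ in the last iteration. Since this procedure does not depend on the time horizon $n$, the index $\baseindex$ with $t_{\baseindex}=\tau_0$ is a fixed constant, each such iteration has delay at most $T$ and power at most $\spow(1)$, and hence their total delay is at most $\baseindex T$ and their total power contribution is at most $\baseindex T\,\spow(1)$, both independent of $n$.

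For the third phase (iterations $\baseindex+i$, $i\ge 1$) I would use the identity
\[
s_{\baseindex+i}=\frac{W(\tau_{i-1})}{\tau_i}=\frac{W(\tau_{i-1})}{\tau_{i-1}}\cdot\frac{\tau_{i-1}}{\tau_i}
.\]
By property~\ref{prop:target:kconv} of a target delay sequence the first factor tends to $\hat s$, and by property~\ref{prop:target:tconv} together with $\hat\tau\ge\min\cl(R)\ge\wbase>0$ the second factor tends to $\hat\tau/\hat\tau=1$; hence $s_{\baseindex+i}\to\hat s$. Since $\spow$ is continuous at $\hat s$, this gives $\spow(s_{\baseindex+i})\to\spow(\hat s)$. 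Moreover every third-phase delay equals $\tau_i\in[\wbase,T]$, so it is bounded away from $0$ and from above.

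Finally I would run the washing-out argument. Fixing $\eps>0$, choose $M$ so that $|\spow(s_{\baseindex+i})-\spow(\hat s)|<\eps/2$ for all $i>M$, and write
\[
\POL(w_1,n)-\spow(\hat s)=\frac{\sum_{i=1}^n t_i\bigl(\spow(s_i)-\spow(\hat s)\bigr)}{\sum_{i=1}^n t_i}
.\]
The prefix terms ($i\le\baseindex+M$) contribute at most the constant $\spow(1)(\baseindex+M)T$ to the numerator (using $0\le\spow(s_i),\spow(\hat s)\le\spow(1)$), while the tail terms are bounded by $\tfrac{\eps}{2}\sum_i t_i$. Since each third-phase delay is at least $\wbase$, the denominator satisfies $\sum_{i=1}^n t_i\ge(n-\baseindex)\wbase\to\infty$, so the prefix contribution to the ratio vanishes and $\limsup_n|\POL(w_1,n)-\spow(\hat s)|\le\eps/2$; as $\eps$ is arbitrary the limit is $\spow(\hat s)$. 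The only genuine departure from the continuous proof, and the main obstacle, is the third-phase analysis: one must establish $s_{\baseindex+i}\to\hat s$, which crucially uses all three defining properties of the target delay sequence (feasibility to keep the trace well-defined, the ratio convergence for the first factor, and the delay convergence with $\hat\tau\ge\wbase>0$ for the second), and then pass through the continuity of $\spow$; everything else is the routine weighted-average limit.
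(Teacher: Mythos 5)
Your proof is correct and takes essentially the same route as the paper's: the same factorization $s_{\baseindex+i}=\frac{W(\tau_{i-1})}{\tau_{i-1}}\cdot\frac{\tau_{i-1}}{\tau_i}$, with Properties~\ref{prop:target:kconv} and~\ref{prop:target:tconv} of the target delay sequence giving $s_i\to\hat s$, then continuity of $\spow$, and finally a weighted-average limit using that all delays lie in $[\wbase,T]$ (the paper uses $[\inf R, T]$ with $\inf R>0$) while $\spow(s_i)\leq\spow(1)$. The paper compresses the washing-out step into one sentence and does not explicitly separate the $\hat s=1$ case, but the substance is identical to what you wrote.
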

\begin{proof}
We claim that $\displaystyle\lim_{n\to\infty} \spow(s_i)=\spow(\hat s)$. (\textit{Proof.} Note that $\displaystyle\lim_{n\to\infty} s_i=\lim_{i\to\infty}\frac{W(\tau_{i-1})}{\tau_{i}}=\lim_{i\to\infty}\frac{W(\tau_{i-1})}{\tau_{i-1}}\cdot\frac{\tau_{i-1}}{\tau_{i}}=\hat s\cdot\frac{\hat\tau}{\hat\tau}=\hat s$. Now the claim holds since $\spow$ is continuous.) Recall that $\POL(w_1,n)=\frac{\sum_{i=1}^n t_i\spow(s_i)}{\sum_{i=1}^n t_i}$, where we have $t_i\in R$ for all $i$ with $\inf R>0$ and $\sup R\leq T$, and $\spow(s_i)\leq \spow(1)$. Thus, for any $\eps>0$, there exists $N\in\mathbb{N}$ such that $|\POL(w_1,n)-\spow(\hat s)|<\eps$ for all $n>N$.
\end{proof}

Finally, we can determine $\lim_{n\to\infty}\OPT(w_1,n)$ by following the proof of Lemma~\ref{lem:opt} verbatim. The proof is thereby omitted.
\begin{lemma}[Asymptotic infimum]
\label{lem:ncopt}
As the time horizon goes to infinity, the infimum average power consumption converges to that of the target speed $\hat s$. That is, 
$\displaystyle\lim_{n\to\infty}\OPT(w_1,n)=\spow(\hat s)$.
\end{lemma}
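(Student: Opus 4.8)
The plan is to reproduce the argument of Lemma~\ref{lem:opt} almost line for line, since the only properties of $\spow$ and $W$ that proof actually exploited — convexity, monotonicity, and continuity of $\spow$, together with the defining inequality $\frac{W(t)}{t}\geq\hat s$ on the relevant range of delays — all survive the passage to a general nondecreasing $W$. I would fix an arbitrary $\eps>0$ and split the claim into a matching upper and lower bound on $\OPT(w_1,n)$.

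For the upper bound I would simply invoke $\OPT(w_1,n)\leq\POL(w_1,n)$ together with Lemma~\ref{lem:ncpol}, which already gives $\POL(w_1,n)\to\spow(\hat s)$; this yields some $N_1$ with $\OPT(w_1,n)<\spow(\hat s)+\eps$ for all $n>N_1$. The substance is the lower bound. Taking an arbitrary real-time feasible trace $s^\mystar_1,\dots,s^\mystar_n$ with workloads $w^\mystar_i$ and delays $t^\mystar_i$, I would apply Jensen's inequality to the convex $\spow$ exactly as in \eqref{e:lem:opt:1}, then use $w^\mystar_{i+1}=W(t^\mystar_i)$ and the monotonicity of $\spow$ to rewrite the argument in terms of $\sum_{i=1}^{n-1}W(t^\mystar_i)$. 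The same algebraic splitting as in Lemma~\ref{lem:opt} isolates the steady-state ratio $\frac{\sum_{i=1}^{n-1}W(t^\mystar_i)}{\sum_{i=1}^{n-1}t^\mystar_i}$ and a vanishing correction term, and the bounds $\wbase\leq t^\mystar_i\leq T$ and $W(t^\mystar_i)\leq T$ again cap the correction by $\frac{T^2}{n\wbase^2}$, giving $\OPT(w_1,n)\geq\spow\!\left(\hat s-\frac{T^2}{n\wbase^2}\right)$. Continuity of $\spow$ at $\hat s>0$ then produces $N_2$ with $\OPT(w_1,n)>\spow(\hat s)-\eps$, completing the proof.

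The single point where the general case genuinely departs from Lemma~\ref{lem:opt} — and the step I would treat most carefully — is the inequality $\sum_{i=1}^{n-1}W(t^\mystar_i)\geq\hat s\sum_{i=1}^{n-1}t^\mystar_i$. In the continuous proof this followed because $\hat s$ was the \emph{minimum} of $\frac{W(t)}{t}$ over $[t_{\min},T]$ and every delay provably lay in that interval via Observation~\ref{obs:lb}. Here $\hat s=\inf_{t\in R}\frac{W(t)}{t}$ is only an infimum, so instead I must argue that each delay $t^\mystar_i$ appearing on the trace is reachable, hence lies in $R$ by Theorem~\ref{thm:reach}, and therefore satisfies $\frac{W(t^\mystar_i)}{t^\mystar_i}\geq\inf_{t\in R}\frac{W(t)}{t}=\hat s$; summing over $i$ yields the required bound. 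Once this substitution of Theorem~\ref{thm:reach} for Observation~\ref{obs:lb} and Definition~\ref{def:khat} is in place, no continuity of $W$ and no attainment of the minimum is needed anywhere in the derivation, and the remainder transfers unchanged — which is exactly why the proof can be said to go through verbatim.
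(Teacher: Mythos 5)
Your proposal is correct and is exactly the paper's approach: the paper's own ``proof'' of Lemma~\ref{lem:ncopt} is simply the remark that the argument of Lemma~\ref{lem:opt} carries over verbatim, which is what you reproduce. In fact you are somewhat more careful than the paper, since you explicitly identify the one step that does \emph{not} transfer literally --- justifying $\sum_{i=1}^{n-1}W(t^\mystar_i)\geq\hat s\sum_{i=1}^{n-1}t^\mystar_i$ via Theorem~\ref{thm:reach} (every delay on a feasible trace lies in $R$) and the infimum in Definition~\ref{def:newtarget}, in place of Observation~\ref{obs:lb} and the attained minimum of Definition~\ref{def:khat} --- and this substitution is precisely the right one.
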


Theorem~\ref{thm:ncmain} follows from Lemmas~\ref{lem:ncpol} and \ref{lem:ncopt}. We conclude this section with the deferred proofs.

\subsection{Deferred Proofs}
\begin{proof}[Proof of Lemma~\ref{lem:1sustain}]
($\Leftarrow$) Let $n\geq 1$ be an arbitrary integer. Consider an execution trace of length $n$ where the speed is fixed at the full speed. Let $t_i$ be the execution delay of the $i$-th iteration.

If there exists $t\in[w_1,T]$ such that $W(t)\leq t$, it is easy to show by induction that $t_i\leq t$ for all $i$. Likewise, if there exists $t\in (w_1,T]$ such that $W(t')<t$ for all $t'\in[w_1,t)$, we can show $t_i<t$ for all $i$.

($\Rightarrow$) 
Suppose that the system is sustainable; Observation~\ref{obs:full} implies that the execution delay of length $n$ in which the speed is fixed at the full speed is real-time feasible. Also note that the execution delay of the $i$-th iteration under an execution trace fixed to the full speed does not depend on the time horizon. Let $t_i$ be this delay, and we obtain an infinite sequence $t_1,t_2,\cdots$.

\emph{Case 1.} $\exists i\ t_{\sup}=t_i$. We have $W(t_i)\leq t_i$, since otherwise $t_{i+1}=W(t_i)>t_i$. Note that $t_{\sup}\in[w_1,T]$.

\emph{Case 2.} $\nexists i\ t_{\sup}=t_i$. Since $t_1=w_1$, we have $t_{\sup}>w_1$. We will prove by contradiction that $W(t')<t_{\sup}$ for all $t'\in[w_1,t_{\sup})$. Suppose there exists $t'\in[w_1,t_{\sup})$ such that $W(t')\geq t_{\sup}$. From the choice of $t_{\sup}$, there exists $j$ such that $t_j\in(t',t_{\sup})$. We then have $t_{j+1}=W(t_j)\geq W(t')\geq t_{\sup}$, reaching contradiction.

\end{proof}

\begin{proof}[Proof of Lemma~\ref{lem:targetexists}]
Since $\hat s:=\inf_{t\in R}\frac{W(t)}{t}$, we can choose a sequence $(t_i)_{i\in\mathbb{Z}_{\geq 0}}$ so that $t_i\in R$ for all $i$ and $\left(\frac{W(t_i)}{t_i}\right)_{i\in\mathbb{Z}_{\geq 0}}$ converges to $\hat s$. For sufficiently large $N$, $(t_i)_{i\geq N}$ yields a sequence that satisfies Properties~\ref{prop:target:feas} and \ref{prop:target:kconv} of Definition~\ref{defn:tds}. From (the one-dimensional case of) Bolzano--Weierstrass theorem (see e.g.~\cite[pp.~54--56]{Apostol}), there exists a subsequence of $(t_i)_{i\geq N}$ which converges, say, to $\hat\tau$. We can then choose a subsequence of this subsequence to achieve Property~\ref{prop:target:tconv} as well. Note that we did not lose Property~\ref{prop:target:feas} or \ref{prop:target:kconv} during this construction.
\end{proof}

\subsection{Experiments}
\label{subsec:genexp}

In this subsection, we experimentally illustrate the operation of our proposed policy under a discontinuous delay-workload relation.

\subsubsection{Setup} 
As was discussed in Section~\ref{subsec:intro_gen}, we use a staircase delay-workload relation obtained from a few data points, which can be an inexpensive substitute of the exact delay-workload relation.
Experimental setup is identical to the experiment in Section~\ref{sec:case}; the only difference is that we use only 11 profiling data points in this experiment. These 11 data points are 3ms apart, i.e., we use the profiling results for execution delays of 0, 3ms, 6ms, $\ldots$, and 30ms.

The conservative staircase bound $W$ is formally defined as follows: given $k$ data points $(\bar t^1,w^1),\ldots,\linebreak (\bar t^k,w^k)$, let $W(t):=\min_{i:t\leq \bar t^i} \bar w^i$. Our staircase delay-workload relation is depicted in Fig.~\ref{fig:genworkload}; this function is presented as the delay-workload relation to each simulated policy (the two heuristics, of course, do not get any information on the delay-workload relation).
These policies therefore work with a conservative bound of the workload, but our simulation uses the true delay-workload relation shown in Fig.~\ref{fig:workload} to calculate the actual execution delays.

\subsubsection{Results}
Table~\ref{tab:genpwr} shows the average power consumption of each policy, where each policy is terminated after 10, 100, 1000, 10000, 100000, and 1000000 iterations. The two heuristic does not assume any knowledge on $W$ and therefore behaves identical to the first experiment (recall that our simulation relied on the true delay-workload relation). Since ASAP simply fixes the processing speed to $s=1$, its result is identical to the first experiment as well.
ALAP, on the other hand, behaves differently because what the policy thinks is the slowest possible processing speed subject to the real-time constraint actually is faster than the true value. We can observe that its average power consumption fluctuates, approximately around 1.43. Our policy, finally, converges to the target speed of $\hat s=0.58107$ (the target delay sequence can be chosen as a constant sequence in this case, see $w=\hat s\cdot t$ in Fig.~\ref{fig:genworkload}.) and its average power consumption tends to $\spow(\hat s)=0.72119$.

\begin{table}[t]
\centering
\caption{Comparison of average power consumptions \mbox{(with a staircase delay-workload relation)}}
\label{tab:genpwr}
\vspace{1ex}\small
\begin{tabular}{l}
\begin{tabular}{r|ccccc}
\hline
 \diagbox{$n$}{Policy} &  ASAP & ALAP & Heuristic 1 & Heuristic 2 & \textbf{Proposed} \\ \hline\hline
         10& 2.50325& 1.42631& 2.33322$\phantom{^*}$& 2.36455& \textbf{1.58071}	\\ \hline
        100& 2.50325& 1.44927& 1.12819$^*$          & 1.47065& \textbf{0.82424}	\\ \hline
       1000& 2.50325& 1.43451& 1.12819$^*$          & 1.19146& \textbf{0.73150}	\\ \hline
      10000& 2.50325& 1.43061& 1.12819$^*$          & 1.15731& \textbf{0.72222}	\\ \hline
     100000& 2.50325& 1.43215& 1.12819$^*$          & 1.15465& \textbf{0.72129}	\\ \hline
    1000000& 2.50325& 1.43233& 1.12819$^*$          & 1.15434& \textbf{0.72120}	\\ \hline
\end{tabular}
\vspace{1ex}\\
$^*$Prematurely halted due to the real-time constraint violation.\hfill\mbox{ }
\end{tabular}
\end{table}

\begin{figure}[t]
\centering
\includegraphics[width=12.0cm, trim =0 1cm 0 1cm]{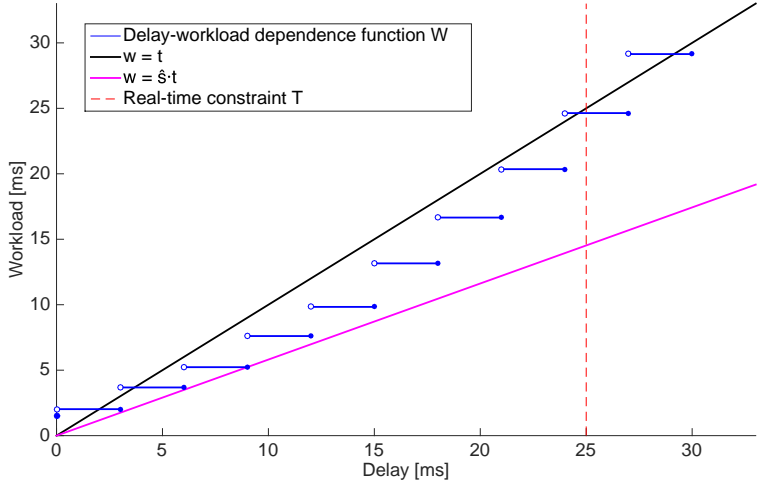}
\caption{Staircase delay-workload dependence in Lucas-Kanade method.}
\label{fig:genworkload}
\end{figure}

\section{Generalizing to Weaker Sets of Assumptions}
\label{sec:modelgen}
Our model proves quite versatile as the set of assumptions made by our model is not minimal: depending on the application at hand, we can drop and/or modify some of these assumptions without loss of generality in order to arrive at a more general model. We discuss these generalizations in this section. It is worth mentioning that they greatly improve the applicability of the proposed policy to a wide variety of underlying hardware platforms.

Sections~\ref{subsec:modelgen:monoconv} and~\ref{subsec:modelgen:discrete} show how the assumptions on the power consumption characteristics $P$ can be relaxed. In particular, Section~\ref{subsec:modelgen:monoconv} explains how to drop the monotonicity and convexity assumptions, whereas Section~\ref{subsec:modelgen:discrete} discuss how to generalize our model to cope with discrete speed modes. Finally, Section~\ref{subsec:modelgen:wmono} explains how the monotonicity assumption on the delay-workload dependence can also be lifted.

\subsection{Dropping the Monotonicity \& Convexity Assumptions on $P$}
\label{subsec:modelgen:monoconv}
Recall that our model assumes that $P$ is nondecreasing, convex, and continuous. In what follows, we show that our model can be generalized to include any continuous $P$.

\subsubsection{Dropping monotonicity}

First we show that the monotonicity assumption can be dropped. Dropping the monotonicity would imply that reducing the speed could cause even bigger power consumption. Intuitively, it is quite clear what we would do in this case: if there are two speed modes $s1$ and $s2$ such that $s1$ is both more power-consuming ($P(s1)>P(s2)$) and slower ($s1<s2$), we will never use $s_1$ and simply replace with $s_2$. The resulting ``new power consumption characteristics'' is denoted by $\bar\spow$ below. We will conclude with a formal argument.

When $P$ is convex and continuous, we show that we can assume without loss of generality that $P$ is nondecreasing. Let $\spow:[0,1]\to\mathbb{R}_+$ be an arbitrary convex and continuous function and $\bar \spow:[0,1]\to\mathbb{R}_+$ be a function defined by $\bar\spow(s):=\min_{s\leq s'\leq 1} P(s')$. It is easy to see that $\bar P$ is continuous and nondecreasing. Moreover, $\bar P$ is convex\footnote{
Consider arbitrary $s_1,s_2,s,\lambda\in [0,1]$ such that $s=\lambda s_1+(1-\lambda) s_2$. For some $s'_1\geq s_1$ and $s'_2\geq s_2$, we have $\bar \spow(s_1)=\spow(s'_1)$ and $\bar \spow(s_2)=\spow(s'_2)$. Observe that $\bar\spow(s)\leq\spow(\lambda s'_1+(1-\lambda) s'_2)\leq\lambda\spow(s'_1)+(1-\lambda)\spow(s'_2)=\lambda \bar\spow(s_1)+(1-\lambda)\bar\spow(s_2)$.

}.

Now we can run the power optimization policy using $\bar\spow$ in lieu of $\spow$, and if the policy says that iteration $i$ is to be run at speed $s_i$ where $\spow(s_i)>\bar\spow(s_i)=\spow(s'_i)$ for some $s'_i>s_i$, we run the iteration at speed $s'_i$ instead. Note that this allows a strictly larger amount of work to be done during the iteration, whereas the power consumption is kept at $\bar\spow(s_i)$. This shows that any power management policy can be used in conjunction with a power consumption characteristics $P$ that is convex and continuous but not necessarily nondecreasing.

\subsubsection{Dropping convexity}

It remains to show that the convexity assumption can further be dropped.
In order to see the intuition first, suppose that three speed modes $s1<s2<s3$ exhibit non-convex power consumption characteristics. Instead of using mode $s2$, we may interleave $s1$ and $s3$ properly, resulting in the same delay as using $s2$. This allows us to regain the convexity assumption. Again, a formal argument follows below.

Let $\spow:[0,1]\to\mathbb{R}_+$ be an arbitrary continuous function. We define $\bar \spow:[0,1]\to\mathbb{R}_+$ as \[
\bar\spow(s):=\min_{s_A,s_B,\lambda\in[0,1], \lambda s_A+(1-\lambda)s_B=s}\lambda\spow( s_A)+(1-\lambda)\spow(s_B)
;\]and we can easily verify that $\bar\spow$ is convex and continuous.

Similarly to above, we run the power optimization policy using $\bar\spow$ in lieu of $\spow$. Suppose that the policy chose speed $s_i$ at iteration $i$, where $P(s_i)>\bar P(s_i)=\lambda\spow( s_A)+(1-\lambda)\spow(s_B)$ for some $s_A,s_B,\lambda\in[0,1]$ such that $\lambda s_A+(1-\lambda)s_B=s$. Running this iteration at speed $s_A$ for $\frac{\lambda w_i}{s_i}$ units of time and at $s_B$ for $\frac{(1-\lambda) w_i}{s_i}$ ensures that the same amount work can be done while the power consumption is kept at $\bar\spow(s_i)$.

\subsection{Discrete Speed Modes}
\label{subsec:modelgen:discrete}
The arguments we used above in order to drop the convexity assumption can in fact be extended to let our model handle the cases where there are a finite number of speed modes. This, for example, covers the case where a microprocessor supports only a predetermined set of frequency-voltage configurations. The underlying intuition is basically the same: again, we emulate an imaginary speed mode by interleaving two existing speed modes.

For a finite set $S\subset[0,1]$ with $0,1\in S$, let $s_1,\ldots,s_{|S|}$ denote the elements in $S$. When the power consumption characteristics is given by $P:S\to\mathbb{R}_+$, defining $\bar\spow:[0,1]\to\mathbb{R}_+$ as\[
\bar\spow(s):=\min_{\sum_{i=1}^{|S|}\lambda_i s_i = s,\sum_{i=1}^{|S|}\lambda_i=1,\lambda_i\in[0,1]\ \mathrm{for}\ \mathrm{all}\ i}\sum_{i=1}^{|S|}\lambda_i\spow(s_i)
\]
yields a convex and continuous function $\bar\spow$.

Similarly to the previous argument, we can run the power management policy in conjunction with $\bar\spow$ in lieu of $\spow$ without loss of generality: running at speed $s$ where $\bar\spow(s)<\spow(s)$ will now be interpreted as using a convex combination of the finitely many (as opposed to two) modes.

\subsection{Replacing the Monotonicity Assumption on $W$}
\label{subsec:modelgen:wmono}
Instead of assuming that $W$ is nondecreasing, we can assume that $W$ is continuous. A non-monotone delay-workload relation means that reducing a delay may lead to a larger workload. In this case, we would naturally just spend more time in this iteration to avoid this anomaly. This is modeled by the new ``imaginary'' delay-workload relation $\bar W$ in what follows.

Define $\bar W:(0,T]\to\mathbb{R}_+$ as $\bar W(t):=\min_{t\leq t'\leq T}W(t')$, and we can easily verify that $\bar W$ is nondecreasing.
Now we can run the power management policy using $\bar W$ instead of $W$; when $W(\frac{w_i}{s_i})>\bar W(\frac{w_i}{s_i})=W(t')$ for some $t'>\frac{w_i}{s_i}$, we introduce $t'-\frac{w_i}{s_i}$ additional units of intentional delay doing nothing. This will increase the delay of the $i$-th iteration to $t'$, assuring that $w_{i+1}$ becomes $\bar W(\frac{w_i}{s_i})$.

\section{Conclusion and Future Work}
We identified a new challenge in the design of CPSs that was previously unheard of in the design of classical real-time systems. The interaction between the cyber and physical components of CPSs induces delay-workload dependence, creating the unique challenge of power optimization with delay-workload dependence awareness. We presented the first formal and comprehensive model, enabling rigorous investigation of this problem. We  proposed a very simple power management policy, and proved this policy is asymptotically optimal. We also experimentally validated the efficiency of our policy.

Our model requires the delay-workload dependence to be determined at design-time using profiling or static analysis. While a complete characterization of workload is \emph{necessary} in the design of a real-time system due to the stringent nature of the real-time constraint, it is rather unfortunate that both profiling and static analysis are operations that can be expensive. In a soft real-time system, on the other hand, the real-time constraint is allowed to be violated ``every once in a while'' and therefore a complete characterization is not an absolute necessity. It would be an interesting future direction to distill ideas from our result to devise a power management policy that operates under an incomplete workload characterization, where the characterization can be obtained at run-time for example as was done by the heuristics we briefly considered in Section~\ref{subsec:heuristic}.
Another interesting future direction would be in incorporating randomness into our framework. In addition to the possible use of randomized computations, CPSs in particular has multiple other sources of randomness, including the physical world. In order to exploit the full potential of power optimization especially under soft real-time settings, it would be useful to allow the parameters of our model to be stochastically specified or adaptively changed.

\section*{Acknowledgment}
The authors would like to thank the anonymous reviewers of versions of this paper~\cite{yang2015modeling, Journalversion} for their helpful comments, and Prof.\@ Seokhee Jeon, Prof.\@ Yong Seok Heo, and Prof.\@ Young-Dae Hong for helpful discussions. 

\bibliographystyle{abbrv}
\bibliography{lit}



\end{document}